\newcommand{\fig}[1]{\figurename~\ref{#1}}
\declaretheorem[name=Lemma]{lemma}
\declaretheorem[name=Corollary]{corollary}
\declaretheorem[name=Proposition]{prop}
\newcommand{\floor}[1]{\left \lfloor #1 \right \rfloor}
\newcommand{\ceil}[1]{\left \lceil #1 \right \rceil}
\newcommand{\lsidef}{\operatorname{\mathsf{l}}}
\newcommand{\rsidef}{\operatorname{\mathsf{r}}}
\newcommand{\lside}[1]{\lsidef(#1)}
\newcommand{\rside}[1]{\rsidef(#1)}
\newcommand{\oppositef}{\operatorname{\mathsf{o}}}
\newcommand{\multiplicityf}{\operatorname{\mathsf{m}}}
\newcommand{\opposite}[1]{\oppositef(#1)}
\newcommand{\multiplicity}[1]{\multiplicityf(#1)}
\newcommand{\varnb}[1]{\operatorname{\mathsf{nb}}_{#1}}
\newcommand{\nb}[2]{\operatorname{\mathsf{nb}}_{#1}(#2)}
\newcommand{\embr}[1]{\mathsf{embr}_{#1}}
\newcommand{\veczero}{\mathbf{0}}
\newcommand{\vecone}{\mathbf{1}}
\newcommand{\N}{\mathbb{N}}
\newcommand{\R}{\mathbb{R}}
\newcommand{\Pconv}{P_{\textsf{con}}}
\newcommand{\Psym}{P_{\textsf{sym}}}
\newcommand{\Pbarely}{P_{\textsf{bar}}}
\newcommand{\origin}{\mathbf{0}}
\newcommand{\conv}[1]{{\sf conv}(#1)}
\newcommand{\fillup}[1]{#1^{\cdot\cdot\cdot}}
\newcommand{\commadots}{,\ldots,}
\newcommand{\side}{\sigma}
\tikzstyle{Wpt} = [
\tikzstyle{Hpt} = [
\tikzstyle{sptline0} = [
\tikzstyle{sptline} = [
\tikzstyle{sptline2} = [
\tikzstyle{sptline3} = [
\tikzstyle{sptline4} = [
\title{
From Crossing-Free Graphs on Wheel Sets \\ to
Embracing Simplices \\ and Polytopes with Few Vertices%
\footnote{Preliminary version appeared in Proc.\ 33rd International Symposium on Computational Geometry (SoCG 2017), volume 77 of LIPIcs, pages 54:1-54:16. Schloss Dagstuhl - Leibniz-Zentrum fuer Informatik, 2017.}
}
\author[1]{Alexander Pilz\thanks{Supported by a Schr\"odinger fellowship of the Austrian Science Fund (FWF): J-3847-N35.}}
\author[2]{Emo Welzl}
\author[2]{Manuel Wettstein}
\affil[1]{Institute of Software Technology, Graz University of Technology, Austria.
\texttt{apilz@ist.tugraz.at}.}
\affil[2]{Department of Computer Science, ETH Z\"urich, Switzerland.
\texttt{\{emo,mw\}@inf.ethz.ch}.
}
\date{August 30, 2019}
\begin{document}
\clearpage\maketitle
\thispagestyle{empty}


\begin{abstract}

A set~$P = H \cup \{w\}$ of~$n+1$ points in general position in the plane is called a \emph{wheel set} if all points but~$w$ are extreme.
We show that for the purpose of counting crossing-free geometric graphs on such a set~$P$, it suffices to know the \emph{frequency vector} of~$P$.
While there are roughly~$2^n$ distinct order types that correspond to wheel sets, the number of frequency vectors is only about~$2^{n/2}$.

We give simple formulas in terms of the frequency vector for the number of crossing-free spanning cycles, matchings, triangulations, and many more.
Based on that, the corresponding numbers of graphs can be computed efficiently.
In particular, we rediscover an already known formula for~$w$-embracing triangles spanned by~$H$.

Also in higher dimensions, wheel sets turn out to be a suitable model to approach the problem of computing the \emph{simplicial depth} of a point~$w$ in a set~$H$, i.e., the number of~$w$-embracing simplices.
While our previous arguments in the plane do not generalize easily, we show how to use similar ideas in~$\R^d$ for any fixed~$d$.
The result is an~$O(n^{d-1})$ time algorithm for computing the simplicial depth of a point~$w$ in a set~$H$ of~$n$ points, improving on the previously best bound of~$O(n^d\log n)$.

Based on our result about simplicial depth, we can compute the number of facets of the convex hull of $n=d+k$ points in general position in~$\R^d$ in time~$O(n^{\max\{\omega,k-2\}})$ where~$\omega \approx 2.373$, even though the asymptotic number of facets may be as large as~$n^k$.
\end{abstract}

\newpage
\setcounter{page}{1}


\section{Introduction}
Computing the number of non-crossing straight-line drawings of certain graph classes (triangulations, spanning trees, etc.) on a planar point set is a well-known problem in computational and discrete geometry.
While for point sets in convex position many of these numbers have simple closed formulas, it seems difficult to compute them efficiently for a given general point set, or to provide tight upper and lower bounds.
In this paper, we provide means for solving these problems for a special class of point sets that we call wheel sets.

\subparagraph*{Conowheel sets.}
Let $P = H \cup \{w\}$ be a set of $n+1$ points in the plane.
Unless stated otherwise, $P$ is assumed to be in general position (i.e., no three points on a common line) and the points in $H$ are assumed to be extreme (i.e., vertices of the convex hull of $P$).
$P$ is in \emph{convex position} if all points including $w$ are extreme, and $P$ is a \emph{wheel set} if all points except $w$ are extreme.
If $P$ is either of them, then we call it a \emph{conowheel set}.

There are three special conowheel sets that reappear throughout the paper in proofs and examples; let us introduce notation in order to be able to easily refer to them later on.
First, we denote by $\Pconv$ a concrete set in convex position (say, the vertex set of a regular $(n+1)$-gon).
Second, we denote by $\Pbarely$ a \emph{barely-in wheel set} (i.e., $H$ is the vertex set of a regular $n$-gon and $w$ is sufficiently close to an edge $e$ of the $n$-gon in such a way that $w$ is in the interior of every triangle spanned by $e$ and a third point of $H$).
Third, we denote by $\Psym$ a \emph{symmetric wheel set} (i.e., $H$ is again the vertex set of a regular $n$-gon with $w$ at its center, perturbed slightly so as to obtain general position).
Example drawings of all three sets for $n=7$ can be seen in Figure~\ref{fig:examples}.

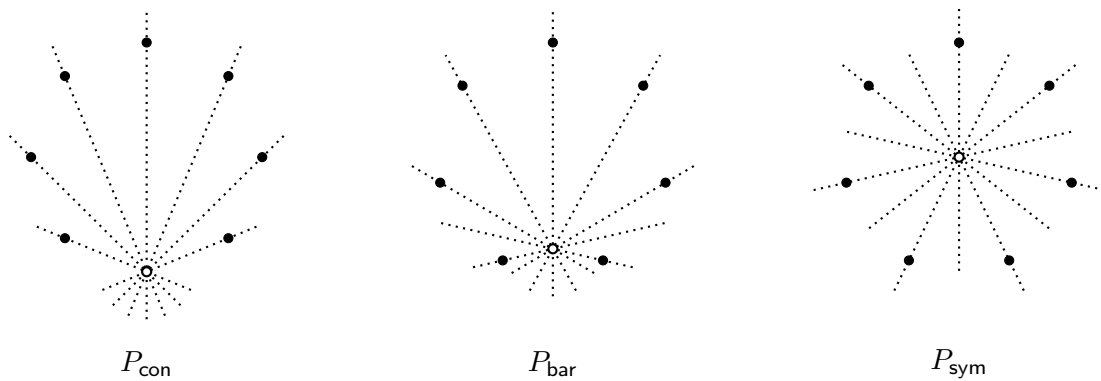
\begin{figure}[b]
  \begin{center}
\begin{tikzpicture}[scale=1.52]
  \begin{scope}[xshift=0]
    \node[Wpt] (w) at (270:1) {};
    \node[Hpt] (h1) at (90+0*360/8:1) {};
    \node[Hpt] (h2) at (90+1*360/8:1) {};
    \node[Hpt] (h3) at (90+2*360/8:1) {};
    \node[Hpt] (h4) at (90+3*360/8:1) {};
    \node[Hpt] (h5) at (90+5*360/8:1) {};
    \node[Hpt] (h6) at (90+6*360/8:1) {};
    \node[Hpt] (h7) at (90+7*360/8:1) {};
    \begin{scope}[on background layer]
      \draw[sptline] (w) -- (h1);
      \draw[sptline] (w) -- (h2);
      \draw[sptline] (w) -- (h3);
      \draw[sptline] (w) -- (h4);
      \draw[sptline] (w) -- (h5);
      \draw[sptline] (w) -- (h6);
      \draw[sptline] (w) -- (h7);
    \end{scope}
    \node at (0,-1.8) {$\Pconv$};
  \end{scope}
  \begin{scope}[xshift=100]
    \node[Wpt] (w) at (270:0.8) {};
    \node[Hpt] (h1) at (90+0*360/7:1) {};
    \node[Hpt] (h2) at (90+1*360/7:1) {};
    \node[Hpt] (h3) at (90+2*360/7:1) {};
    \node[Hpt] (h4) at (90+3*360/7:1) {};
    \node[Hpt] (h5) at (90+4*360/7:1) {};
    \node[Hpt] (h6) at (90+5*360/7:1) {};
    \node[Hpt] (h7) at (90+6*360/7:1) {};
    \begin{scope}[on background layer]
      \draw[sptline] (w) -- (h1);
      \draw[sptline] (w) -- (h2);
      \draw[sptline] (w) -- (h3);
      \draw[sptline2] (w) -- (h4);
      \draw[sptline2] (w) -- (h5);
      \draw[sptline] (w) -- (h6);
      \draw[sptline] (w) -- (h7);
    \end{scope}
    \node at (0,-1.8) {$\Pbarely$};
  \end{scope}
  \begin{scope}[xshift=200]
    \node[Wpt] (w) at (0,0) {};
    \node[Hpt] (h1) at (90+0*360/7:1) {};
    \node[Hpt] (h2) at (90+1*360/7:1) {};
    \node[Hpt] (h3) at (90+2*360/7:1) {};
    \node[Hpt] (h4) at (90+3*360/7:1) {};
    \node[Hpt] (h5) at (90+4*360/7:1) {};
    \node[Hpt] (h6) at (90+5*360/7:1) {};
    \node[Hpt] (h7) at (90+6*360/7:1) {};
    \begin{scope}[on background layer]
      \draw[sptline2] (w) -- (h1);
      \draw[sptline2] (w) -- (h2);
      \draw[sptline2] (w) -- (h3);
      \draw[sptline2] (w) -- (h4);
      \draw[sptline2] (w) -- (h5);
      \draw[sptline2] (w) -- (h6);
      \draw[sptline2] (w) -- (h7);
    \end{scope}
    \node at (0,-1.8) {$\Psym$};
  \end{scope}
\end{tikzpicture}
  \end{center}
  \caption{Convex position, a barely-in, and a symmetric wheel set. The extra point $w$ is drawn in white. The dotted supporting lines are instrumental in the definition of the frequency vector.}
  \label{fig:examples}
\end{figure}

The numbers of triangulations, pseudo-triangulations, and perfect matchings on wheel sets have been studied~\cite{Randall01, RuizW15}.
Our work generalizes these approaches.
Wheel sets have also been useful in the investigation of high-dimensional polytopes with few vertices; already in the 1960s, Perles has counted so-called distended standard forms of Gale diagrams of simplicial $d$-polytopes with at most $d+3$ vertices (as reported by Gr\"unbaum~\cite[Chapter~6.3]{gruenbaum_book}).
In the terminology of modern discrete geometry, these correspond to the different order types of wheel sets.

\subparagraph*{Order types.}
The \emph{order type} of a point set $P$ in general position is a combinatorial description that assigns an orientation (either clockwise or counterclockwise) to every ordered triple of points~\cite{multidimensional_sorting}.
Two point sets are then said to have the same order type if there exists a bijection between the two sets that preserves these orientations.
In this paper, however, we follow the practice of also considering two point sets to have the same order type if there exists a bijection that reverses all orientations (i.e., reflected point sets have the same order type).

Many combinatorial properties of a point set can be recovered from its order type.
In particular, the order type determines whether two segments with endpoints in $P$ cross, and whether a given point in $P$ is extreme.
It is not hard to see that all sets in convex position have the same order type.
However, the same is not true for wheel sets.

\begin{restatable*}{theorem}{thmordtype}
  \label{thm:ordtype}
  For any $n \geq 1$, the number of order types of conowheel sets of size $n+1$ is\footnote{Here, $\varphi(k)$ denotes Euler's totient function, which counts the integers coprime to $k$ that are at most $k$.}
  \begin{equation*}
    \frac{1}{4n}\sum_{2 \nmid k|n} \varphi(k)2^{n/k}+2^{\floor{(n-3)/2}} = \Theta(2^n/n) \enspace .
  \end{equation*}
\end{restatable*}

The above formula also counts so-called self-dual 2-colored necklaces~\cite{Brouwer80,PalmerR84}.\footnote{Sequence A007147 on OEIS (Online Encyclopedia of Integer Sequences).}
In Section~\ref{sec:ordtypefreqvec} we give a proof of the theorem by describing a bijection between order types of conowheel sets and such necklaces.
The formula itself, however, has first been obtained by Perles (as stated, without proof, in~\cite[Chapter~6.3]{gruenbaum_book}) in his investigation into the number of different combinatorial types of simplicial $d$-polytopes with at most $n=d+3$ vertices. 
The correspondence between simplicial polytopes and self-dual necklaces has in turn been estalished by Montellano-Ballesteros and Strausz~\cite{MontellanoS04} using Radon complexes.

\subparagraph*{Frequency vectors.}
While the order type of a point set determines the set of non-crossing straight-line graphs on it, we will see that the number of non-crossing graphs on a wheel set is already determined by the so-called frequency vector, which is defined as follows.

Let $P = H \cup \{w\}$ be a conowheel set and let $h \in H$ be arbitrary.
Let $\lside{h}$ denote the number of points strictly to the left of the directed line going from $w$ to $h$, and let $\rside{h}$ denote the number of points strictly to the right of that line.
The \emph{frequency vector} of $P$ is the vector $F(P) = (F_0,F_1,\dots,F_{n-1})$ where $F_i$ is the number of points $h \in H$ satisfying $|\lside{h}-\rside{h}| = i$.

One can easily verify the following examples for $n=7$ in Figure~\ref{fig:examples}.
\begin{align*}
  F(\Pconv) = (1,0,2,0,2,0,2) &&
  F(\Pbarely) = (1,0,2,0,4,0,0) &&
  F(\Psym) = (7,0,0,0,0,0,0)
\end{align*}

Note that the frequency vector can be computed in $O(n \log n)$ time by radially sorting~$H$ around~$w$.
It is also clear that the order type determines the frequency vector.
However, the converse is not true.
In Section~\ref{sec:ordtypefreqvec} we give a characterization of frequency vectors, which allows the following conclusion.

\begin{restatable*}{theorem}{thmfreqvec}
  \label{thm:freqvec}
  For any $n \geq 1$, the number of frequency vectors realizable by a conowheel set of size $n+1$ is $2^{\ceil{n/2}-1}$.
\end{restatable*}

Given that the number of frequency vectors is significantly smaller than the number of order types, it is unclear how much the frequency vector reveals about a conowheel set.
However, we will see that for the purpose of counting non-crossing structures it is both sufficient and necessary.

There is again a connection to polytopes with few vertices.
We will see that frequency vectors of wheel sets of size $n+1$ correspond to $f$-vectors of simplicial $d$-polytopes with at most $n=d+3$ vertices.
Linusson has counted the latter using a sophisticated analysis of so-called $M$-sequences~\cite{Linusson99} and asks for a more direct proof, which will be provided in this paper.

\subparagraph*{Geometric graphs.}
A \emph{geometric graph} on $P$ is a graph with vertex set $P$ and edges drawn as straight segments between the corresponding endpoints, and it is \emph{crossing-free} if no two edges intersect in their respective relative interiors.
Many families of crossing-free geometric graphs have been defined and studied, such as \emph{triangulations}, \emph{perfect matchings}, \emph{spanning trees}, and so on.

There exists a vast literature that is concerned with counting these crossing-free structures on specific point sets or proving extremal upper and lower bounds \cite{AichholzerHHHKV07,SharirS11,SharirSW13,SharirW06}.
One comparatively simple case is if $P$ is in convex position.
In that case, counting triangulations is a classic problem that goes back to Euler, and it gives rise to the famous Catalan numbers.
For many other families of graphs (such as perfect matchings and spanning trees), simple closed formulas can be obtained as well \cite{DulucqP93,FlajoletN99,Motzkin48}.

Randall et al.~\cite{Randall01} have been the first to consider geometric graphs on wheel sets.
They have found the extremal configurations for triangulations and pseudo-triangulations by using an argument that involves continuously moving around the extra point $w$.
We will follow a similar approach in this paper.
The case of perfect matchings has been studied by Ruiz-Vargas and Welzl~\cite{RuizW15}.
The next theorem, as proved in Section~\ref{thm:ggraph}, is a generalization of a result from their paper.

In the following, let $\mathcal{G}$ be a set of abstract (unlabeled) graphs with $n+1$ vertices, and let $\nb{\mathcal{G}}{P}$ denote the number of crossing-free geometric graphs on $P$ that are isomorphic to a graph in $\mathcal{G}$.
In other words, $\nb{\mathcal{G}}{P}$ is the number of non-crossing straight-line embeddings of graphs in $\mathcal{G}$ on~$P$.

\begin{restatable*}{theorem}{thmggraph}
  \label{thm:ggraph}
  Let $\mathcal{G}$ be arbitrary, and let $P = H \cup \{w\}$ be a conowheel set of size $n+1$.
  Then, $\nb{\mathcal{G}}{P}$ depends only on the frequency vector $F(P) = (F_0,F_1,\dots,F_{n-1})$.
  More concretely,
  \begin{equation*}
    \nb{\mathcal{G}}{P} = \gamma_n - \frac{1}{2} \sum_{h \in H} \lambda_{\lside{h},\rside{h}} = \sum_{k=0}^{n-1} F_k \Lambda_k \enspace ,
  \end{equation*}
  where $\gamma_n$ and $\lambda_{l,r} = \lambda_{r,l}$ are integers and $\Lambda_k$ are rationals depending on $\mathcal{G}$.
\end{restatable*}

While the latter formula in the above theorem makes the dependency on the frequency vector more obvious, the former will turn out to be more natural.
The latter formula follows from the former simply by putting $\Lambda_k = \gamma_n/n + 1/2 \cdot \lambda_{(n+k-1)/2, (n-k-1)/2}$.

In Section~\ref{sec:originembracing} we further show that the converse of Theorem~\ref{thm:ggraph} also holds.
That is, the frequency vector is determined by the number of non-crossing embeddings of certain graphs.

\begin{restatable*}{theorem}{thmggraphconverse}
  \label{thm:ggraphconverse}
  Let $P$ and $P'$ be two conowheel sets of the same size.
Then, $\nb{\mathcal{G}}{P} = \nb{\mathcal{G}}{P'}$ holds for every graph class $\mathcal{G}$ if and only if $F(P) = F(P')$.
\end{restatable*}

We give just one example here, which at the same time makes the connection to the later parts of the paper.
Let $\mathcal{G} = \{\fillup{K_4}\}$, where $\fillup{K_4}$ is obtained by adding $n-3$ additional isolated vertices to the complete graph $K_4$.
The following formula is obtained alongside the proof of Theorem~\ref{thm:ggraph}:
\begin{equation}
  \label{eq:embtrg}
  \nb{\mathcal{G}}{P}
  = \binom{n}{3} - \frac{1}{2} \sum_{h \in H} \left(\binom{\lside{h}}{2} + \binom{\rside{h}}{2}\right)
  = \binom{n}{3} - \sum_{h \in H} \binom{\lside{h}}{2} \hspace{0.5cm} \text{for $\mathcal{G} = \{\fillup{K_4}\}$} \enspace.
\end{equation}

Observe that all non-crossing embeddings of $\fillup{K_4}$ on a given conowheel set $P = H \cup \{w\}$ have the following property:
One of the vertices of the underlying $K_4$ is mapped to the point $w$, while the other three vertices are mapped to three points that form a triangle that contains the extra point $w$ in its interior.
With equation (\ref{eq:embtrg}) we thus get a rather simple formula for the number of \emph{$w$-embracing triangles} (i.e., point triples in $H$ whose convex hull contains~$w$).

We also note that the second (and simpler) formula in equation~(\ref{eq:embtrg}) can be derived with a more direct argument, essentially by subtracting all non-$w$-embracing triangles from the set of all triangles spanned by $H$. 
This observation has been made already earlier in~\cite{rousseeuw}.

Further note that the set of $w$-embracing triangles does not change if we replace a point $h \in H$ by any other point $\tilde h$ on the ray that starts at $w$ and passes through $h$.
For counting $w$-embracing triangles, the approach for conowheel sets thus generalizes to arbitrary point sets.

\subparagraph*{Higher dimensions.}
The concept of conowheel sets can be generalized to arbitrary dimensions, where we may again consider sets with at most one non-extreme point.
However, even for counting $w$-embracing tetrahedra in $3$-space, the ideas from the proof of Theorem~\ref{thm:ggraph} do not generalize easily.
Nevertheless, in Section \ref{sec:higher_dimensions} we give a generalization of equation~(\ref{eq:embtrg}).
From that we obtain improved time bounds for computing the number of $w$-embracing simplices or, in other words, the \emph{simplicial depth} of a point $w$ in $H$, as defined in~\cite{liu90}.

\begin{restatable*}{theorem}{thmsimpldepth}
  \label{thm:simpldepth}
  Let $d \geq 3$ be fixed and let $H$ be a set of $n$ points in $\R^d$.
  Then, the simplicial depth of a point $w$ in $H$ can be computed in $O(n^{d-1})$ time.
\end{restatable*}

Again, this result is stated for arbitrary sets $H$ and not for wheel sets only, as for the simplicial depth only the position relative to $w$ is relevant.
We further note that the algorithm generalizes to counting all $k$-element subsets of $H$ whose convex hull contains~$w$.

The simplicial depth of a point has attracted considerable attention as a measure of data depth.
Several authors have investigated the problem in the plane~\cite{geometric_medians,khuller_mitchell,rousseeuw}.
Algorithms that run in time $O(n^2)$ and $O(n^4)$ in $\R^3$ and $\R^4$, respectively, are provided by Cheng and Ouyang~\cite{d_4}, who also point out flaws in previous algorithms in 3-space.
Our result improves over the best known general $O(n^d\log n)$ time algorithm for points in constant dimension~$d$~\cite{afshani}.
For $d$ not constant, the problem is known to be $\#P$-complete and $W[1]$-hard~\cite{afshani}.

One of the aims of the work by Perles has been to count facets of high-dimensional simplicial polytopes with few vertices.
Via the so-called Gale dual, this number is equal to the number of simplices embracing the origin in a low-dimensional dual point set.
In Section~\ref{sec:higher_dimensions} we further show how to apply Theorem~\ref{thm:simpldepth} in order to compute the number of facets of such a polytope.

\begin{restatable*}{theorem}{thmcalcfacet}
\label{thm:calc_facet}
Let $k \geq 1$ be fixed and let $S$ be a set of $n=d+k$ points in general position in $\R^{d}$.
Then, the number of facets of the simplicial polytope $\conv{S}$ can be computed in $O(n^{k-2})$ time if $k \geq 5$ and in $O(n^\omega)$ time otherwise, where $\omega \approx 2.373$. 
\end{restatable*}


\section{Order Types and Frequency Vectors}
\label{sec:ordtypefreqvec}

The purpose of this section is to give an explanation for Table~\ref{tbl:numbers}.
The latter contains the numbers of distinct order types and frequency vectors corresponding to conowheel sets of size $n+1$.
For completeness, we have also included the corresponding numbers if equivalence over order types is defined to not include reflections.

\begin{table}[b]
  \centering
  {\small
  \begin{tabular}{rrrrcrrrr}
    \toprule
       & \multicolumn{2}{c}{\bf Order Types} & \multicolumn{1}{c}{\bf Freq.\ Vectors} & &
       & \multicolumn{2}{c}{\bf Order Types} & \multicolumn{1}{c}{\bf Freq.\ Vectors} \\ 
   $n$ & with & w/o                          & & &
   $n$ & with & w/o                          & \\
       & \multicolumn{2}{c}{reflection}      & & &
       & \multicolumn{2}{c}{reflection}      & \\
    \cmidrule{1-4}
    \cmidrule{6-9}
      1 & 1 & 1 & 1 & & 7 & 9 & 10 & 8 \\
      2 & 1 & 1 & 1 & & 8 & 12 & 16 & 8 \\
      3 & 2 & 2 & 2 & & 9 & 23 & 30 & 16 \\
      4 & 2 & 2 & 2 & & 10 & 34 & 52 & 16 \\
      5 & 4 & 4 & 4 & & 11 & 63 & 94 & 32 \\
      6 & 5 & 6 & 4 & & 12 & 102 & 172 & 32 \\
    \bottomrule
  \end{tabular}
  }
  \caption{Number of order types and frequency vectors of conowheel sets over $n+1$ points.}
  \label{tbl:numbers}
\end{table}


\subparagraph*{Order types.}
Given a set $H$ of $n=7$ points forming the vertex set of a regular heptagon, as on the left hand side of Figure~\ref{fig:heptagon}, there are eight conowheel sets $P = H \cup \{w\}$ with distinct order types that can be obtained by adding an extra point $w$.
By first deforming $H$, as illustrated on the right hand side of Figure~\ref{fig:heptagon}, one obtains the ninth and last order type for $n=7$, giving the count listed in Table~\ref{tbl:numbers}.
This necessary deformation of $H$ seems to complicate matters significantly, but only at first sight.

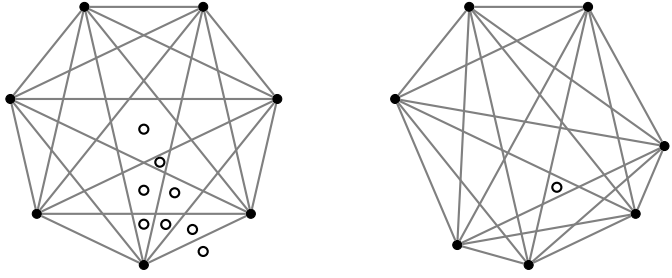
\begin{figure}
  \centering
  \begin{tikzpicture}[rotate=270,scale=1.8]
    \begin{scope}
       \node[Hpt] (p1) at (1*360/7:1) {};
       \node[Hpt] (p2) at (2*360/7:1) {};
       \node[Hpt] (p3) at (3*360/7:1) {};
       \node[Hpt] (p4) at (4*360/7:1) {};
       \node[Hpt] (p5) at (5*360/7:1) {};
       \node[Hpt] (p6) at (6*360/7:1) {};
       \node[Hpt] (p7) at (7*360/7:1) {};
       \node[Wpt] at (0,0) {};
       \node[Wpt] at (0:0.45) {};
       \node[Wpt] at (0:0.70) {};
       \node[Wpt] at (360/14:0.27) {};
       \node[Wpt] at (360/14:0.52) {};
       \node[Wpt] at (360/14:0.82) {};
       \node[Wpt] at (360/14:1.0) {};
       \node[Wpt] at (360/14-360/28:0.72) {};
       \draw[sptline3] (p1) -- (p2);
       \draw[sptline3] (p1) -- (p3);
       \draw[sptline3] (p1) -- (p4);
       \draw[sptline3] (p1) -- (p5);
       \draw[sptline3] (p1) -- (p6);
       \draw[sptline3] (p1) -- (p7);
       \draw[sptline3] (p2) -- (p3);
       \draw[sptline3] (p2) -- (p4);
       \draw[sptline3] (p2) -- (p5);
       \draw[sptline3] (p2) -- (p6);
       \draw[sptline3] (p2) -- (p7);
       \draw[sptline3] (p3) -- (p4);
       \draw[sptline3] (p3) -- (p5);
       \draw[sptline3] (p3) -- (p6);
       \draw[sptline3] (p3) -- (p7);
       \draw[sptline3] (p4) -- (p5);
       \draw[sptline3] (p4) -- (p6);
       \draw[sptline3] (p4) -- (p7);
       \draw[sptline3] (p5) -- (p6);
       \draw[sptline3] (p5) -- (p7);
       \draw[sptline3] (p6) -- (p7);
    \end{scope}
       
     \begin{scope}[yshift=80]
       \node[Hpt] (p1) at (1*360/7:1) {};
       \node[Hpt] (p2) at (2*360/7-20:1) {};
       \node[Hpt] (p3) at (3*360/7:1) {};
       \node[Hpt] (p4) at (4*360/7:1) {};
       \node[Hpt] (p5) at (5*360/7:1) {};
       \node[Hpt] (p6) at (6*360/7+20:1) {};
       \node[Hpt] (p7) at (7*360/7:1) {};
       \node[Wpt] (w) at (360/14:0.475) {};
       \draw[sptline3] (p1) -- (p2);
       \draw[sptline3] (p1) -- (p3);
       \draw[sptline3] (p1) -- (p4);
       \draw[sptline3] (p1) -- (p5);
       \draw[sptline3] (p1) -- (p6);
       \draw[sptline3] (p1) -- (p7);
       \draw[sptline3] (p2) -- (p3);
       \draw[sptline3] (p2) -- (p4);
       \draw[sptline3] (p2) -- (p5);
       \draw[sptline3] (p2) -- (p6);
       \draw[sptline3] (p2) -- (p7);
       \draw[sptline3] (p3) -- (p4);
       \draw[sptline3] (p3) -- (p5);
       \draw[sptline3] (p3) -- (p6);
       \draw[sptline3] (p3) -- (p7);
       \draw[sptline3] (p4) -- (p5);
       \draw[sptline3] (p4) -- (p6);
       \draw[sptline3] (p4) -- (p7);
       \draw[sptline3] (p5) -- (p6);
       \draw[sptline3] (p5) -- (p7);
       \draw[sptline3] (p6) -- (p7);
     \end{scope}
    \end{tikzpicture}
  \caption{All order types of conowheel sets for $n=7$. Each white point $w$ combined with the surrounding set $H$ of black points yields one distinct set.}
  \label{fig:heptagon}
\end{figure}

As already noted in the introduction, the formula in Theorem~\ref{thm:ordtype} has also been obtained in the context of counting  \emph{self-dual 2-colored necklaces} with $2n$ beads and with mirrored necklaces identified~\cite{Brouwer80,PalmerR84}.
These are binary (say, black and white) circular sequences of length $2n$ such that elements at distance $n$ (i.e., opposing beads) are distinct (i.e., if one is black the other must be white, and vice versa).
We give a proof for Theorem~\ref{thm:ordtype} by using a simple bijection to such necklaces.

We also note that a similar (and slightly simpler) formula is known if mirrored necklaces are not identified.\footnote{Sequence A000016 on OEIS.}
Naturally, such a formula also counts order types of conowheel sets without reflection.

{\renewcommand{\footnote}[1]{} 
\thmordtype}
\begin{proof}
  Let $P = H \cup \{w\}$ be a conowheel set.
  Consider a directed line $s$ containing $w$ that rotates counterclockwise with $w$ as a hub by $2\pi$.
  The line passes over each point in $H$ twice, once on the positive ray and once on the negative ray.
  We record the sequence in which the points $h \in H$ are passed, and indicate for each entry whether the corresponding point $h$ was on the positive or negative ray of $s$.
  This sequence can be considered cyclic, and is known as the \emph{local sequence} of $w$ \cite{semispaces}.
  It depends only on the order type of $P$, and it naturally corresponds to a self-dual necklace with $2n$ beads and two colors (say, positive becomes black and negative becomes white), as illustrated in Figure~\ref{fig:bijection}.
  
  \begin{figure}[b]
    \begin{center}
      \begin{tikzpicture}
           \begin{scope}[xshift=0]
               \node[Wpt] (w) at (0,0) {};
               \node[Hpt] (h1) at (30:1) {};
               \node[Hpt] (h2) at (50:0.8) {};
               \node[Hpt] (h3) at (90:-0.8) {};
               \node[Hpt] (h4) at (150:-1) {};
               \node[Hpt] (h5) at (170:1.3) {};
               \begin{scope}[on background layer]
                 \draw[->] (98:1.3)  arc (98:130:1.3);
                 \draw[sptline2] (w) -- (h1);
                 \draw[sptline2] (w) -- (h2);
                 \draw[sptline2] (w) node[right=7pt,above=30pt] {$s$} -- (h3);
                 \draw[sptline2] (w) -- (h4);  
                 \draw[sptline2] (w) -- (h5);
               \end{scope}
           \end{scope}
           \begin{scope}[xshift=60]
             \node {$\mapsto$};
           \end{scope}
           \begin{scope}[xshift=100,scale=0.75,rotate=360/20]
             \tikzstyle{beada} = [
               circle,
               draw = black,
               fill = white,
               inner sep = 0pt,
               minimum width = 5.5pt
             ]
             \tikzstyle{beadb} = [
               circle,
               draw = black,             
               fill = black,
               inner sep = 0pt,
               minimum width = 5.5pt
             ]
               \draw (0,0) circle (1);
                 \node[beada] (b0) at (-1*36:1) {};
                 \node[beadb] (b1) at (0*36:1) {};
                 \node[beadb] (b2) at (1*36:1) {};
                 \node[beada] (b3) at (2*36:1) {};
                 \node[beada] (b4) at (3*36:1) {};
                 \node[beadb] (b5) at (4*36:1) {};
                 \node[beada] (b6) at (5*36:1) {};
                 \node[beada] (b7) at (6*36:1) {};
                 \node[beadb] (b8) at (7*36:1) {};
                 \node[beadb] (b9) at (8*36:1) {};
           \end{scope}
           \begin{scope}[xshift=140]
             \node {$\mapsto$};
           \end{scope}
           \begin{scope}[xshift=190,scale=0.75,rotate=360/20]
               \draw[sptline0] (0*36:1.5) -- (0*36:-1.5);
               \draw[sptline0] (1*36:1.5) -- (1*36:-1.5);
               \draw[sptline0] (2*36:1.5) -- (2*36:-1.5);
               \draw[sptline0] (3*36:1.5) -- (3*36:-1.5);
               \draw[sptline0] (4*36:1.5) -- (4*36:-1.5);
               \node[Wpt] at (0,0) {};
               \draw[sptline3] (0,0) circle (1);
               \node[Hpt] at (-1*36:-1) {};
               \node[Hpt] at (0*36:1) {};
               \node[Hpt] at (1*36:1) {};
               \node[Hpt] at (2*36:-1) {};
               \node[Hpt] at (3*36:-1) {};
           \end{scope}
      \end{tikzpicture}
    \end{center}
    \caption{Bijection between order types of wheel sets and self-dual 2-colored necklaces.}
    \label{fig:bijection}
  \end{figure}
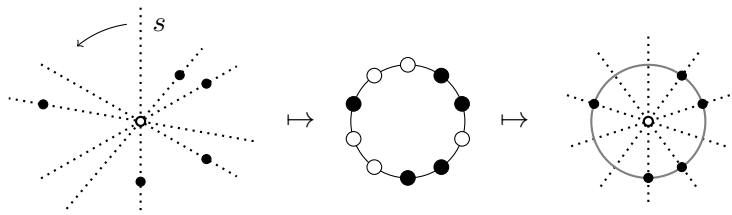
  
  The above mapping is seen to be a bijection by considering its inverse.
  Given a necklace, we can transform it into an order type by placing $w$ at the center of a circle with $2n$ equally spaced vertices on its boundary, by identifying the beads of the necklace with those vertices in circular order, and then by placing a point $h$ on each vertex that corresponds to a black bead.
  By construction, the resulting point set $P = H \cup \{w\}$ is a conowheel set with the same order type that we started with.
  
  The asymptotic estimate is explained by taking the dominant summand with $k=1$.
\end{proof} 


\subparagraph*{Frequency vectors.}
The following lemma characterizes frequency vectors.
The proof is again by letting a line rotate about the extra point $w$, and by observing how it dissects the point set $H$ during the process.
More details can be found in \cite{RuizW15}.

\begin{restatable}{lemma}{lemfreqvecchar}
  \label{lem:freqvecchar}
  $F = (F_0,F_1,\dots,F_{n-1}) \in \N^{n}$ is the frequency vector of a conowheel set $P = H \cup \{w\}$ of size $n+1$, i.e., $F = F(P)$, if and only if
  \begin{multicols}{2}
  \begin{enumerate}[(i)]
    \item $\sum_{k=0}^{n-1} F_k = n$,
    \item $F_k = 0$ for all $k \equiv_2 n$,
    \item $F_k$ is even for all $k \geq 1$, and
    \item if $F_k \neq 0$ and $k \geq 2$, then $F_{k-2} \neq 0$.
  \end{enumerate}
  \end{multicols}
\end{restatable}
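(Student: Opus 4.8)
The plan is to analyze a directed line through $w$ that rotates by a full turn, exactly as in the proof of Theorem~\ref{thm:ordtype}, and to read off the whole frequency vector from a single step function attached to this rotation. For a generic direction $\theta$ (no point of $H$ on the line) let $b(\theta)$ be the number of points of $H$ strictly to the left of the directed line of direction $\theta$ through $w$ minus the number strictly to the right, so $b(\theta) = 2\lside{}-n$ has the parity of $n$. As $\theta$ increases, $b$ is locally constant except that it drops by $2$ when the positive ray sweeps over a point and rises by $2$ when the negative ray does; by general position each event involves a single point, so every jump is exactly $\pm 2$. Two facts drive everything. First, at the event where the positive ray passes a point $h$ we have $b(\theta^-)=2\lside{h}+2-n$ and $b(\theta^+)=2\lside{h}-n$, so $\delta(h) := \lside{h}-\rside{h} = 2\lside{h}-(n-1)$ is the \emph{midpoint} of that downward jump; hence $F_k$ is the number of points $h$ with $|\delta(h)|=k$. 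Second, reversing the orientation exchanges left and right, so $b(\theta+\pi)=-b(\theta)$: the curve is antisymmetric under the half-turn.

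For necessity, (i) is immediate since each point of $H$ contributes to exactly one $F_k$ and $|H|=n$. For (ii), $\delta(h)$ has the parity of $n-1$, equivalently the jump midpoints are exactly the integers of parity $n+1$ lying strictly between consecutive values of $b$ (which have parity $n$), so $F_k=0$ whenever $k\equiv n \pmod 2$. For (iii) and (iv) I would use that $b$ is a \emph{closed} step function, so for every level $\lambda$ it has equally many downward and upward crossings. A downward crossing of $\lambda$ is precisely an event with $\delta=\lambda$, while by $b(\theta+\pi)=-b(\theta)$ an upward crossing of $\lambda$ is the half-turn image of a downward crossing of $-\lambda$; therefore $\#\{h:\delta(h)=\lambda\}=\#\{h:\delta(h)=-\lambda\}$ for every $\lambda$, which for $\lambda=k\ge 1$ gives $F_k=2\,\#\{h:\delta(h)=k\}$ and proves (iii). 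For (iv), if $F_k\ne 0$ with $k\ge 2$ then $b$ attains the value $k+1$ (just before a $\delta=k$ event) and, by antisymmetry, the value $-(k+1)$; since $b$ moves in steps of $2$ and $k-2$ is of parity $n+1$ and lies strictly between these, the discrete intermediate value theorem forces a crossing of level $k-2$, producing a point with $|\delta|=k-2$, i.e. $F_{k-2}\ne 0$.

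For sufficiency, given $F$ satisfying (i)--(iv) I would exhibit a realizing set by reusing the necklace correspondence from Theorem~\ref{thm:ordtype}: it suffices to build a self-dual $2$-colored necklace on $2n$ beads whose induced step function $b$ has the prescribed crossing profile, and then place the black beads as points on a circle with $w$ at its center. Such a realization is automatically a conowheel set in general position, since a line meets the circle in at most two points and no two points are antipodal (the antipode of a black bead is white). Conditions (ii)--(iv) say exactly that the target imbalance multiset is symmetric about $0$, supported on the parity-$(n+1)$ integers, and fills the interval from the parity-forced minimum $k_0\in\{0,1\}$ outward up to some $K$; this is precisely the crossing profile of a closed antisymmetric $\pm2$-step path, which can be assembled one conjugate pair $\{+k,-k\}$ of imbalances at a time.

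I expect the sufficiency direction to be the main obstacle. Necessity is a clean reading of a single closed curve, whereas the construction must produce a configuration with \emph{exactly} the prescribed multiset of imbalances even though the $\delta$-values of the points are globally coupled (moving one point changes the half-arc counts of the others), and general position must be certified simultaneously. I would sidestep both difficulties by working entirely on the combinatorial side, assembling the step function (equivalently, the necklace) so that (i)--(iv) are manifestly satisfied, and only then invoking the circle realization, which discharges general position for free; the remaining bookkeeping for this assembly is the delicate part and follows the rotating-line analysis of Ruiz-Vargas and Welzl~\cite{RuizW15}.
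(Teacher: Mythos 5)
Your necessity argument is essentially the paper's. The paper rotates a directed line about $w$ by $\pi$ and studies the sequence $\gamma = g(s_{1/2}),\dots,g(s_{n+1/2})$ with $g(s)=\lside{s}-\rside{s}$, which moves in $\pm2$ steps and satisfies $g(s_{1/2})=-g(s_{n+1/2})$; your full-turn antisymmetric step function $b$ is the same object in different clothing. Your derivations of (i)--(iv) --- parity for (ii), equality of up- and down-crossings of each level combined with the half-turn antisymmetry for (iii), and the discrete intermediate value theorem between $k+1$ and $-(k+1)$ for (iv) --- are correct and match the paper's; your explicit pairing of up-crossings of $\lambda$ with down-crossings of $-\lambda$ is if anything a slightly cleaner way to get (iii).

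The gap is in sufficiency, and you have flagged it yourself: the entire content of that direction is the construction of a closed $\pm2$-step path (equivalently, a self-dual necklace) whose multiset of jump midpoints realizes a prescribed $F$ satisfying (i)--(iv), and you only assert that this ``can be assembled one conjugate pair at a time'' while deferring the bookkeeping elsewhere. That is the one step that must actually be exhibited, and it is short. Let $k_{\max}$ be the largest index with $F_{k_{\max}}\neq 0$ and build a sequence $\gamma'$ of length $n+1$ with steps $\pm2$ running from $-k_{\max}-1$ to $+k_{\max}+1$: for even $n$, say, ascend from $-k_{\max}-1$ to $0$ (jumping once over each of $-k_{\max},-k_{\max}+2,\dots,-1$), and then, for $k=1,3,\dots,k_{\max}$ in increasing order, jump back and forth over $k$ exactly $F_k-1$ further times before moving on. Here (iv) guarantees $F_k\neq 0$, hence $F_k\geq 2$ by (iii), for every admissible $k\leq k_{\max}$, so $F_k-1\geq 1$; and (iii) makes $F_k-1$ odd, so each oscillation phase ends at $k+1$, ready for the next level; (i) makes the total length come out to $n+1$. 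The realization you describe --- one point per line, placed on the front or back ray of a circle around $w$ according to the sign of the corresponding jump --- is exactly the paper's, and your general-position observation is fine. So the skeleton is right and the first half is complete, but as written the ``if'' direction is a plan rather than a proof.
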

\begin{proof}
  Let us first prove the ``only if''.
  From the definition of frequency vectors, (i) and (ii) are immediate; (i) just requires there to be $n$ points in $H$, and (ii) simply says that the parity of the difference $|\lside{h}-\rside{h}|$ can never be the same as $n$ given that $\lside{h}+\rside{h}+1=n$.
  
  As for (iii) and (iv), we make use of a particular sequence $s_{1/2},s_{1},\dots,s_n,s_{n+1/2}$ of lines, as illustrated in Figure~\ref{fig:rotation} and as defined in the following.
  Let $s_1,s_2,\dots,s_n$ be the lines that pass through $w$ and one of the $n$ points of $H$, sorted in radial counterclockwise order around $w$.
  Let $s_{i+1/2}$ be any line that is in between $s_i$ and $s_{i+1}$ (indices are understood modulo $n$).
  More precisely, $s_{i+1/2}$ may be any of the intermediate lines that are encountered when transforming $s_i$ into $s_{i+1}$ by a counterclockwise rotation about $w$.
  Finally, give all lines a direction by orienting $s_{1/2}$ arbitrarily and then rotating counterclockwise about $w$ by an angle $\pi$.
  In particular, this means that $s_{1/2}$ and $s_{n+1/2}$ are the same lines with reversed directions.
  \begin{figure}
    \centering
    \begin{tikzpicture}
           \begin{scope}[rotate=360/20,scale=1.223]
               \draw[sptline0] (0*36:1.5) -- (0*36:-1.5) node[fill=white] {$s_1$};
               \draw[sptline0] (1*36:1.5) -- (1*36:-1.5) node[fill=white] {$s_2$};
               \draw[sptline0] (2*36:1.5) -- (2*36:-1.5) node[fill=white] {$s_3$};
               \draw[sptline0] (3*36:1.5) -- (3*36:-1.5) node[fill=white] {$s_4$};
               \draw[sptline0] (4*36:1.5) -- (4*36:-1.5) node[fill=white] {$s_5$};
               \draw[sptline3] (0,0) circle (1);
               \draw[sptline0] (0.5*36:1.5) -- (0.5*36:-1.5) node[left=3pt] {$s_{1+{1 \over 2}}$};
               \draw[sptline0] (1.5*36:1.5) -- (1.5*36:-1.5) node[below=4pt] {$s_{2+{1 \over 2}}$};
               \draw[sptline0] (2.5*36:1.5) -- (2.5*36:-1.5) node[below=4pt] {$s_{3+{1 \over 2}}$};
               \draw[sptline0] (3.5*36:1.5) -- (3.5*36:-1.5) node[right=3pt] {$s_{4+{1 \over 2}}$};
               \draw[sptline0] (4.5*36:1.5) node[left=5pt] {$s_{{1 \over 2}}$} -- (4.5*36:-1.5) node[right=5pt] {$s_{5+{1 \over 2}}$};
               \node[Hpt] at (-1*36:-1) {};
               \node[Hpt] at (0*36:1) {};
               \node[Hpt] at (1*36:1) {};
               \node[Hpt] at (2*36:-1) {};
               \node[Hpt] at (3*36:-1) {};
               \node[Wpt] at (0,0) {};
           \end{scope}
           \begin{scope}[xshift=150,yshift=-15,yscale=0.5]
             \draw [->](0,-3.5)--(0,3.5) node[above=2pt] {$y = g(s_x)$};
             \draw [->](-0.9,0)--(6.0,0) node[right=2pt] {$x$};
             \foreach \x/\xtext in {
               1.5/{1+{1 \over 2}}, 2.5/{2+{1 \over 2}},
               3.5/{3+{1 \over 2}}, 5.5/{5+{1 \over 2}} }
             {\draw (\x cm,2pt ) -- (\x cm,-2pt ) node[anchor=north] {$\scriptscriptstyle\xtext$};}
             \foreach \x/\xtext in {
               0.5/{{1 \over 2}}, 4.5/{4+{1 \over 2}} }
             {\draw (\x cm,2pt ) -- (\x cm,-2pt ) node[anchor=south] {$\scriptscriptstyle\xtext$};}
             \foreach \y/\ytext in {
               -3/-3, -1/-1,
               +1/+1, +3/+3}
             {\draw (1pt,\y cm) -- (-1pt ,\y cm) node[anchor=east] {$\scriptstyle\ytext$};}
             \tikzstyle{marker} = [
               rectangle,
               draw=black,
               fill=white,
               inner sep=2pt
             ];
             \draw[thick] (0.5,-1) node[marker] {}
                -- (1.5,+1) node[marker] {}
                -- (2.5,+3) node[marker] {}
                -- (3.5,+1) node[marker] {}
                -- (4.5,-1) node[marker] {}
                -- (5.5,+1) node[marker] {};
           \end{scope}
    \end{tikzpicture}
    \caption{On the left, the construction of the lines $s_{1/2},s_{1},\dots,s_{n+1/2}$ as in the proof of Lemma \ref{lem:freqvecchar}. The lines are understood directed towards their corresponding label. On the right, a plot of the corresponding sequence $\gamma = g(s_{{1 \over 2}}),\dots,g(s_{n+{1 \over 2}})$.}
    \label{fig:rotation}
  \end{figure}
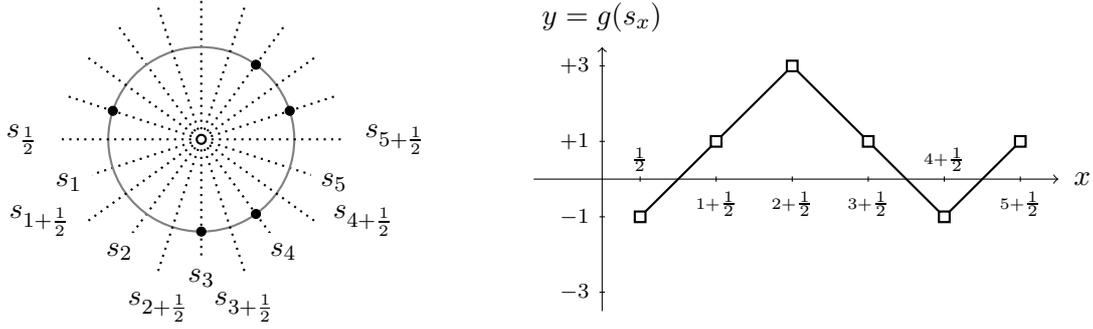
  
  Now, let $\lside{s}$ and $\rside{s}$ denote the number of points of $H$ strictly to the left and right, respectively, of a directed line $s$, and let $g(s) := \lside{s} - \rside{s}$.
  We consider the sequence $\gamma = g(s_{1/2}),g(s_{1+1/2}),\dots,g(s_{n+1/2})$ and make the following three crucial observations.
  First, for any integer~$i$, $g(s_i)$ is the average of $g(s_{i-1/2})$ and $g(s_{i+1/2})$.
  Second, the sequence $\gamma$ is ``continuous'' in the sense that any two subsequent elements differ by exactly $2$ (henceforth called a \emph{jump} over the integer in between).
  Third, the sequence $\gamma$ is ``cyclic'' in the sense that $g(s_{1/2}) = -g(s_{n+1/2})$.
  This kind of continuity and cyclicity implies property (iii) because for any integer $k \geq 1$ with $k \not\equiv_2 n$, the sequence $\gamma$ must jump over $+k$ and $-k$ an even number of times.
  Note that the same is not true for $k=0$ if $n$ is odd; in this case, the number of times that $\gamma$ jumps over $0$ is clearly odd.
  Property (iv) follows because $\gamma$ jumps over $0$ or $\pm 1$ (depending on the parity of $n$) at least once; hence, in order to jump over $\pm k$ the sequence must jump over $\pm (k-2)$ first. 
  
  In order to prove the ``if'', let us fix any vector $F = (F_0,F_1,\dots,F_{n-1})$ satisfying (i)--(iv), and let us show how to realize it by a conowheel set $P = H \cup \{w\}$ of size $n+1$.
\newcommand{\kmax}{k_{\mathsf{max}}}
Let $\kmax \geq 0$ be the largest integer that satisfies $F_{\kmax} \neq 0$.
Inspired by the first part of the proof, we construct a sequence $\gamma'$ of length $n+1$ that starts with $-\kmax-1$, ends with $+\kmax+1$, in which any two subsequent elements differ by exactly $2$, and such that the number of times $\gamma'$ jumps over any non-negative integer $k \not\equiv_2 n$ or its additive inverse $-k$ is equal to $F_k$.
Such a sequence $\gamma'$ is easy to construct, given that (i)--(iv) hold.
Indeed, one way to do it, say for even $n$, is to start with $-\kmax-1,-\kmax+1,\dots,0$ and then, for each integer $k=1,3,\dots,\kmax$ jump back and forth over $k$ exactly $F_k-1$ times.
  
  As for the construction of $P$, we start by drawing a circle with $w$ at its center, and $n$ distinct lines $s_1, \dots, s_n$ passing through $w$ in counterclockwise order.
We consider the lines to be directed in such a way that $s_2,\dots,s_n$ point into the half-space to the left of $s_1$, much like in Figure~\ref{fig:rotation}.
For each line $s_i$ we place one additional point $h$ on one of the two intersections of $s_i$ and the fixed circle around $w$.
If the $i$-th jump in $\gamma'$ is increasing, then we place the point $h$ at the back of $s_i$; otherwise, we place $h$ at the front of $s_i$.

It is clear that the resulting set $P = H \cup \{w\}$ is a conowheel set of size $n+1$.
Moreover, note that $\gamma'$ can be recovered when given only $P$ and the sequence $s_1,\dots,s_n$, simply by constructing the sequence $\gamma$ as in the first part of the proof.
Hence, $P$ has frequency vector $F$.
\end{proof}

With this characterization, it is not hard to determine the number of frequency vectors.

\thmfreqvec
\begin{proof}
  For $n=1$ and $n=2$ the formula evaluates to $1$, which is consistent with the fact that there is only one respective order type for either two or three points.
For larger $n$, we give a proof by induction for odd $n$, and note that even $n$ can be handled analogously.
  
  So, let $n = m+2 \geq 3$ be odd.
  We partition the set of frequency vectors $F = (F_0,F_1,\dots,F_{n-1})$ that are realizable by $n+1$ points into two groups, based on whether $F_0 = 1$ or not.
  \begin{itemize}
    \item If $F_0 = 1$, then $F_1 = 0$ by (ii), $F_2 \geq 2$ by (i)--(iv), and $F' = (F_2-1,F_3,F_4,\dots,F_{n-1})$ is any frequency vector realizable by $m+1$ points.
      Indeed, property (i) for $F'$ follows because the sum over all entries has decreased by two in $F'$, properties (ii) and (iii) for $F'$ follow immediately from the corresponding properties for $F$, and the same can be said for property (iv) except for the special case $F_0' = F_2-1 \neq 0$, which follows from $F_2 \geq 2$.
      
    \item If $F_0 \neq 1$, then $F_0 \geq 3$ because it is odd by (i) and (iii), $F_{n-2} = 0$ by (ii), $F_{n-1} = 0$ by (i) and (iii)--(iv), and $F'= (F_0-2,F_1,F_2,\dots,F_{n-3})$ is any frequency vector realizable by $m+1$ points.
    Indeed, properties (i)--(iii) for $F'$ follow in the same way as in the first case, and for property (iv) the only interesting special case is again $F_0' = F_0 - 2 \neq 0$, which follows from $F_0 \geq 3$.
  \end{itemize}
  If $2^{\ceil{m/2}-1}$ is the number of frequency vectors realizable by $m+1$ points, the corresponding number for $n+1$ points is thus $2 \cdot 2^{\ceil{m/2}-1} = 2^{\ceil{n/2}-1}$.
\end{proof}


\section{Geometric Graphs}
\label{sec:ggraph}

Recall that $\nb{\mathcal{G}}{P}$ is the number of crossing-free geometric graphs on a point set $P$ that are isomorphic to a graph in the family $\mathcal{G}$.
Consider now two conowheel sets $P = H \cup \{w\}$ and $P' = H \cup \{w'\}$ which can be transformed into each other by moving the extra point over the segment between $h_1,h_2 \in H$, and without encountering any other collinearities.
The situation is illustrated in Figure~\ref{fig:mutation}.
Assume that on the $w$-side of the segment $h_1h_2$ there are $i$ points of $H$, and on the $w'$-side there are $j$ points of $H$; thus, $i+j=n-2$.
Let $\delta_{i,j}$ be the increment of $\varnb{\mathcal{G}}$ when going from $P$ to $P'$, i.e., $\delta_{i,j} := \nb{\mathcal{G}}{P'} - \nb{\mathcal{G}}{P}$.

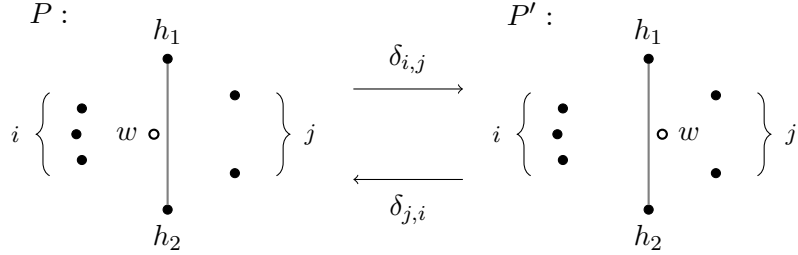
\begin{figure}
  \centering
     \begin{tikzpicture}[xscale=1.2]
       \begin{scope}[xshift=-75]
        \node[Hpt,label=above:$h_1$] (p) at (90:1) {};
         \node[Hpt,label=below:$h_2$] (q) at (270:1) {};
         \node[Wpt,label=left:$w$] (w) at (-0.15,0) {};
         \node[Hpt] at (160:1) (l1) {};
         \node[Hpt] at (180:1) (l2) {};
         \node[Hpt] at (200:1) (l3) {};
         \node[Hpt] at (35:0.9) (r1) {};
         \node[Hpt] at (-35:0.9) (r2) {};
         \draw[sptline3] (p) -- (q);
         
         \node at (-1.3,1.6) {$P:$};
         
         \draw [decorate,decoration={brace,mirror,amplitude=5pt,raise=10pt}]
(-1,0.55) -- (-1,-0.55) node [midway,xshift=-0.8cm] {\footnotesize $i$};
         \draw [decorate,decoration={brace,amplitude=5pt,raise=10pt}]
(+0.9,0.55) -- (+0.9,-0.55) node [midway,xshift=0.8cm] {\footnotesize $j$};
       \end{scope}
       
         \begin{scope}
           \draw[->] (-0.6,0.6) -- (+0.6,0.6) {};
           \draw[->] (+0.6,-0.6) -- (-0.6,-0.6) {};
           \node at (0,+1.0) {$\footnotesize\delta_{i,j}$};
           \node at (0,-1.0) {$\footnotesize\delta_{j,i}$};
         \end{scope}
       
       \begin{scope}[xshift=75]
         \node[Hpt,label=above:$h_1$] (p) at (90:1) {};
         \node[Hpt,label=below:$h_2$] (q) at (270:1) {};
         \node[Wpt,label=right:$w$] (w) at (+0.15,0) {};
         \node[Hpt] at (160:1) (l1) {};
         \node[Hpt] at (180:1) (l2) {};
         \node[Hpt] at (200:1) (l3) {};
         \node[Hpt] at (35:0.9) (r1) {};
         \node[Hpt] at (-35:0.9) (r2) {};
         \draw[sptline3] (p) -- (q);
         
         \node at (-1.3,1.6) {$P':$};
         
         \draw [decorate,decoration={brace,mirror,amplitude=5pt,raise=10pt}]
(-1,0.55) -- (-1,-0.55) node [midway,xshift=-0.8cm] {\footnotesize $i$};
         \draw [decorate,decoration={brace,amplitude=5pt,raise=10pt}]
(+0.9,0.55) -- (+0.9,-0.55) node [midway,xshift=0.8cm] {\footnotesize $j$};
       \end{scope}
     \end{tikzpicture}
  \caption{Moving the extra point over the segment $h_1h_2$ for the case $i=3$ and $j=2$.}
  \label{fig:mutation}
\end{figure}

\begin{lemma}
  \label{lem:mutation}
  For every ${\cal G}$, $\delta_{i,j}$ is well-defined; i.e., its value depends only on $i$, $j$ and $\mathcal{G}$, and not on the exact placement of $H$ or the location where the extra point traverses the segment between $h_1$ and $h_2$.
\end{lemma}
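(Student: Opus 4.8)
The plan is to track exactly which pairs of edges change their crossing status as the extra point passes from $w$ to $w'$, and then to argue that the resulting net change in $\nb{\mathcal{G}}{P}$ can be read off from purely local data. First I would observe that during the continuous motion the point $w$ crosses only the line through $h_1$ and $h_2$, and it does so strictly inside the segment $h_1h_2$; by the general-position assumption no other collinearity is encountered. Hence the only orientation that flips is that of the triple $(w,h_1,h_2)$, and so the crossing status of two edges can change only if their four endpoints are exactly $\{w,h_1,h_2,x\}$ for some $x\in H$. Of the three ways to split these four points into two edges, I would check that the pairings $\{wh_1,\,h_2x\}$ and $\{wh_2,\,h_1x\}$ are governed by whether $w$ lies on one or the other side of the line through $h_2$ and $x$, respectively through $h_1$ and $x$; since $w$ never crosses either of these lines (that would be a forbidden collinearity), these crossings stay unaffected. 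Only the pairing $\{wx,\,h_1h_2\}$ changes: using that $H$ is in convex position and that $w$ traverses the open segment, the spoke $wx$ crosses the chord $h_1h_2$ exactly when $w$ and $x$ lie on opposite sides of $h_1h_2$. This crossing therefore appears for each of the $i$ points on the $w$-side and disappears for each of the $j$ points on the $w'$-side.

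Next I would reduce $\delta_{i,j}$ to a net count of graphs of a very restricted shape. Using the natural identification of (labelled) geometric graphs on $P$ and on $P'$ (keeping the abstract edge set and mapping $w\mapsto w'$), a graph belongs to $\mathcal{G}$ on $P$ iff it does on $P'$, so $\delta_{i,j} = \nb{\mathcal{G}}{P'}-\nb{\mathcal{G}}{P}$ counts graphs in $\mathcal{G}$ that become crossing-free minus those that cease to be crossing-free. Since every pair whose crossing status changes involves the chord $h_1h_2$, a graph not containing $h_1h_2$ has the same status on both sides and contributes $0$. For a graph containing $h_1h_2$, its status is controlled solely by which spokes $wx$ with $x\notin\{h_1,h_2\}$ it contains: a crossing-free-on-$P'$ graph with $h_1h_2$ cannot contain a spoke to the $w$-side, and symmetrically for $P$. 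Thus $\delta_{i,j} = |\mathcal{X}^+|-|\mathcal{X}^-|$, where $\mathcal{X}^+$ consists of the graphs in $\mathcal{G}$ that are crossing-free on $P'$, contain $h_1h_2$, and have at least one spoke to the $w'$-side, and $\mathcal{X}^-$ is the analogous set with $P$ and the $w$-side.

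The hard part will be showing that $|\mathcal{X}^+|-|\mathcal{X}^-|$ depends only on $i$, $j$, and $\mathcal{G}$, and neither on the exact placement of $H$ nor on the location $m\in h_1h_2$ where $w$ crosses. Here I would exploit that $H$ is in convex position, so the combinatorial type of $H$ together with the chord $h_1h_2$ — in particular which chords among $H$ cross — is completely determined once we know that the two arcs cut off by $h_1h_2$ have sizes $i$ and $j$. I would then contract $w$ onto the crossing point $m$ on the open segment; at this position no spoke crosses $h_1h_2$, so the affected crossings are switched off and the crossing-free-ness of the remaining part of each graph no longer references the side on which a spoke lies. The goal is to set up a correspondence that, with the non-spoke part of the graph held fixed, trades configurations of spokes reaching the $w$-side against those reaching the $w'$-side, expressing the net count through a function of $i$, $j$, and $\mathcal{G}$ alone. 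I expect the delicate point to be graphs carrying several spokes at once, where a spoke may cross another chord of $H$ and thereby interact with the unaffected crossings; controlling these interactions, and verifying independence from $m$, is where the real work lies. As a sanity check, for $\mathcal{G}=\{\fillup{K_4}\}$ the sets $\mathcal{X}^+$ and $\mathcal{X}^-$ are just the $w$-embracing triangles $h_1h_2x$ with $x$ on the $w'$- respectively $w$-side, giving $|\mathcal{X}^+|=j$ and $|\mathcal{X}^-|=i$ and hence $\delta_{i,j}=j-i$, in agreement with equation~(\ref{eq:embtrg}).
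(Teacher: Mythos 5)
Your opening reduction is sound and matches the first step of the paper's proof: only the orientation of $(w,h_1,h_2)$ flips, so graphs not containing the edge $\{h_1,h_2\}$ contribute nothing to $\delta_{i,j}$, and the difference is controlled entirely by graphs that do contain that edge. (One small caution: your claim that the pairings $\{wh_1,h_2x\}$ are ``governed by'' the side of the line $h_2x$ on which $w$ lies is not the full crossing test --- that predicate also involves the orientation of $(w,h_1,h_2)$, which does change --- but the conclusion that these pairs never cross near the transition is correct and easy to repair.) However, you then stop exactly where the proof has to be made: you state that showing $|\mathcal{X}^+|-|\mathcal{X}^-|$ depends only on $i$, $j$ and $\mathcal{G}$ is ``where the real work lies,'' sketch a correspondence trading spokes between the two sides, and flag multi-spoke graphs and the crossing location $m$ as unresolved difficulties. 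That is a genuine gap, not a technicality: the well-definedness claim is precisely this step.

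The missing idea is a separation argument that makes all of your worries evaporate at once. Since $h_1$ and $h_2$ are extreme points of $P$, the segment $h_1h_2$ is a chord of $\conv{P}$, so in any crossing-free graph containing $\{h_1,h_2\}$ \emph{every} other edge (spokes included) lies entirely on one closed side of that chord; the graph thus decomposes into the edge $h_1h_2$ plus one crossing-free graph on each side, glued at $h_1,h_2$. Moreover, the no-other-collinearities hypothesis forces $w$ to lie inside every triangle $h_1h_2x$ with $x$ on its side, so the $w$-side of $P$ is a barely-in configuration of $i+3$ points and the far side is $j+2$ points in convex position --- and each of these has a \emph{unique} order type. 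Hence the number of crossing-free graphs on $P$ containing $\{h_1,h_2\}$ (not merely the difference of the two counts) is already a function of $i$, $j$ and $\mathcal{G}$ alone, and likewise for $P'$ with the roles of $i$ and $j$ swapped; $\delta_{i,j}$ is their difference. This disposes of the interaction between spokes and other chords (whether a spoke crosses a chord on its own side is determined by the fixed order type of that side) and of any dependence on the crossing location $m$, without needing the bijective trading scheme you propose.
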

\begin{proof}
  All geometric graphs that do not contain the edge $\{h_1,h_2\}$ are not affected by the mutation; i.e., they are crossing-free on $P$ if and only if they are crossing-free on $P'$.
  Therefore, $\delta_{i,j}$ is equal to the number of crossing-free geometric graphs on $P'$ containing $\{h_1,h_2\}$ minus the number of crossing-free geometric graphs on $P$ containing $\{h_1,h_2\}$.
  For the following reasons, these numbers depend only on $i$, $j$ and $\mathcal{G}$.
  
  In the case of $P$, on the $w$-side we have $i+3$ points (including $h_1$ and $h_2$) in a barely-in configuration, for which there exists a unique order type.
  On the opposite side we have $j+2$ points (including $h_1$ and $h_2$) in convex position, for which there also exists a unique order type.
  Because of the presence of the edge $\{h_1,h_2\}$ between two extreme points, any other edges must be completely contained in one of the two sides, and the claim follows.
  In the case of $P'$, an analogous argument works.
\end{proof}

\subparagraph*{Example, embracing triangles.} Consider the case ${\mathcal{G}} = \{\fillup{K_4}\}$.
Observe that any crossing-free embedding of $\fillup{K_4}$ on $P$ uses $w$ as the inner vertex of the underlying $K_4$.
Furthermore, if the embedding uses the edge $\{h_1,h_2\}$, which implies that $h_1$ and $h_2$ are outer vertices of $K_4$, then any one of the $i$ points on the left hand side can be used as the third outer vertex of $K_4$.
This gives exactly $i$ crossing-free embeddings of $\fillup{K_4}$ on $P$ which use the edge $\{h_1,h_2\}$.
Similarly, we get $j$ for the corresponding number of embeddings on $P'$.
Therefore, $\delta_{i,j} = j-i$ for $\mathcal{G} = \{\fillup{K_4}\}$.
\hfill \break
\thmggraph
\begin{proof}
  We proceed by choosing the numbers $\lambda_{l,r}$ such that the validity of the formula is preserved under continuous motion of $P$, and then choose $\gamma_n$ such that it holds for some starting configuration.
  To be more concrete, we allow continuous motion of $P$ as long as it remains a conowheel set.
  At discrete moments in time we allow collinearity of three points (but not more), the one in the middle being $w$.
  As already seen in Figure~\ref{fig:heptagon} for the special case $n=7$, any two conowheel sets can be transformed into each other in this manner.
  
  Let now $P$ and $P'$ be as in Figure~\ref{fig:mutation}.
  Note that the values $\lside{h}$ and $\rside{h}$ do not change for any $h \in H \setminus \{h_1,h_2\}$ when going from $P$ to $P'$, since there are no other collinearities than the one involving $h_1$ and $h_2$.
  For $h_1$ and $h_2$ the corresponding values are 
  \begin{align*}
    & P:  && \lside{h_1} = \rside{h_2} = i   && \rside{h_1} = \lside{h_2} = j+1 \enspace, \\
    & P': && \lside{h_1} = \rside{h_2} = i+1 && \rside{h_1} = \lside{h_2} = j \enspace.
  \end{align*}
  Evaluating the formula in Theorem~\ref{thm:ggraph} for $P'$ and $P$, and then taking the difference yields
  \begin{equation*}
    \delta_{i,j} \overset{!}{=} \frac{1}{2} (\lambda_{i,j+1} + \lambda_{j+1,i}) - \frac{1}{2} (\lambda_{i+1,j} + \lambda_{j,i+1}) =
    \lambda_{i,j+1} - \lambda_{i+1,j} \enspace.
  \end{equation*}
  
  That is, we preserve the validity of the formula under continuous motion as long as the numbers $\lambda_{l,r}$ are chosen in such a way that the above equality holds.
  Setting $\lambda_{l,r} := \delta_{n-2,0} + \delta_{n-3,1} + \dots + \delta_{l,r-1} + c_n$ satisfies this constraint.
  Moreover, the symmetry $\lambda_{l,r} = \lambda_{r,l}$ (which is needed for simplifying the right hand side in the above equation) indeed follows from the skew-symmetry $\delta_{i,j} = - \delta_{j,i}$, which guarantees that the additional summands (say, in $\lambda_{l,r}$ when compared to $\lambda_{r,l}$) cancel out.
 Further note that $c_n$ is an arbitrary integer independent of $l$ and~$r$.
 For the proof to go through one could simply fix $c_n=0$, but in some of the applications other values for $c_n$ will turn out to be more convenient.
  
  Finally, $\gamma_n$ is chosen in such a way that the formula holds for some conowheel set.
  The most natural choice for ``anchoring'' the formula is a set in convex position.
  Putting
  \begin{equation*}
    \gamma_n :=
      \nb{\mathcal{G}}{\Pconv} + \frac{1}{2} \sum_{{l,r \colon l + r = n - 1}} \lambda_{l,r}
  \end{equation*}
  works, since for a set in convex position the sum in the formula of Theorem~\ref{thm:ggraph} and the sum in the definition of $\gamma_n$ cancel out.
\end{proof}

As already noted in the introduction, computing the frequency vector can be done in $O(n \log n)$ time.
Given the values $\Lambda_k$, computing the number $\nb{\mathcal{G}}{P}$ of embeddings then requires only $O(n)$ additional arithmetic operations.

\subparagraph*{Example continued, embracing triangles.}
We have already derived $\delta_{i,j} = j-i$ for $\mathcal{G} = \{\fillup{K_4}\}$.
This now gives rise to
\begin{equation*}
  \lambda_{l,r} = \delta_{n-2,0} + \delta_{n-3,1} + \dots + \delta_{l,r-1} + c_n = \sum_{j=0}^{r-1} j - \left(\binom{n-1}{2} - \sum_{i=0}^{l-1} i\right) + c_n = \binom{l}{2} + \binom{r}{2} \enspace ,
\end{equation*}
if we choose $c_n = \binom{n-1}{2}$.
It can be checked that $\nb{\mathcal{G}}{\Pconv} = 0$ or, in words, that there is no way to draw a $w$-embracing triangle spanned by $H$ if $w$ is not contained in the convex hull of~$H$.
Hence,
\begin{equation*}
  \gamma_n = \nb{\mathcal{G}}{\Pconv} + \frac{1}{2} \sum_{{l,r \colon l + r = n - 1}} \lambda_{l,r} = 0 + \frac{1}{2}\sum_{l=0}^{n-1} \binom{l}{2} + \frac{1}{2}\sum_{r=0}^{n-1} \binom{r}{2} = \binom{n}{3} \enspace .
\end{equation*}
After putting everything together we obtain the exact formula displayed earlier in equation~(\ref{eq:embtrg}).

\subsection{Further Examples}
We call the following two simple applications ``insensitive'' since the number of crossing-free embeddings is the same on all wheel sets, but may be different for sets in convex position.

\subparagraph*{Spanning cycles.}
Consider the case where $\mathcal{G}$ consists of a cycle over $n+1$ vertices.
Note that if $P$ is in convex position, then there is only one spanning cycle, whereas if $P$ is a barely-in wheel set, then there are $n$ spanning cycles. 
Hence, we get $\delta_{0,n-2} = -\delta_{n-2,0} = n-1$.
Moreover, for all other mutations we get $\delta_{i,j} = 0$, since $i$ and $j$ being non-zero implies that no crossing-free spanning cycle can ever contain the edge $\{h_1,h_2\}$ as in Figure~\ref{fig:mutation}.
It follows that all wheel sets over $n+1$ points admit $n$ crossing-free spanning cycles (which can easily be seen directly).

\subparagraph*{Spanning paths.}
If $\mathcal{G}$ consists of a path over $n+1$ vertices we also get $\delta_{i,j} = 0$ unless $i=0$ or $j=0$, but for a different reason.
Consider again Figure~\ref{fig:mutation}.
On $P$ there are $2 \cdot 2^i \cdot 2^{j-1}$ crossing-free embeddings that use the edge $\{h_1,h_2\}$; the factor $2$ is for deciding which one of $h_1$ and $h_2$ is connected to the left hand side, the factor $2^i$ is for completing the left hand side (a subset of $i+2$ points in convex position including $h_1$, say) to a path, and the factor $2^{j-1}$ is for completing the right hand side (a subset of $j+1$ points in convex position including $h_2$, say) to a path.
Likewise, on $P'$ there are $2 \cdot 2^{i-1} \cdot 2^j$ embeddings, which is the same number.

For the special case $i=0$ and $j=n-2$, we get $2^{n-2}$ embeddings on $P$ that use the edge $\{h_1,h_2\}$ using a similar way of counting as above.
For $P'$ we require a slightly more involved argument, however.
All remaining points except $h_1$ and $h_2$ are first divided in two parts based on whether they belong to the path extension that starts in $h_1$ or $h_2$.
There are only $2 \cdot (n-1)$ sensible such divisions; the factor $2$ is for deciding whether $w$ belongs to the path starting in $h_1$ or $h_2$, and the factor $(n-1)$ is for deciding how many of the remaining $n-2$ points belong to $h_1$ or $h_2$ (this is enough because in order to obtain a crossing-free spanning path, both subsets of points must appear consecutively along the boundary of the convex hull of $P'$).
Now suppose for example that the extra point $w$ is contained in the path extension that starts in $h_2$, and that $k \geq 1$ other points (from the remaining $n-2$) belong to $h_1$.
Under these assumptions, we can again compute that there are $2^{k-1} \cdot 2^{n-2-k}$ crossing-free embeddings since both parts are in convex position and do not interfere with each other.
The total number of embeddings on $P'$ is obtained by summing over all values of $k$ (beware of the special case $k=0$), which finally yields
\begin{equation*}
  \delta_{0,n-2} = -\delta_{n-2,0} = 2 \cdot (2^{n-2} + \sum_{k=1}^{n-2} 2^{k-1} \cdot 2^{n-2-k}) - 2^{n-2} = (n-1) 2^{n-2} \enspace .
\end{equation*}

For anchoring we can compute $\nb{\mathcal{G}}{\Pconv} = (n+1)2^{n-2}$.
Adding up the two numbers that we computed last yields $\nb{\mathcal{G}}{P} = n2^{n-1}$ for all wheel sets $P$.

\hfill \break

The following two applications are ``sensitive'' in the sense that different wheel sets in general have different numbers of crossing-free embeddings.
The running example with $w$-embracing triangles also is of this kind.

\subparagraph*{Matchings.}
Let $\mathcal{G} = \mathcal{M}$, the set of (not necessarily perfect) matchings over $n+1$ vertices.
It is known that $\nb{\mathcal{M}}{\Pconv} = M_{n+1} := \sum_{k=0}^{\floor{(n+1)/2}} \binom{n+1}{2k}C_k$, the $(n+1)$-th Motzkin number\footnote{Sequence A001006 on OEIS.} \cite{Motzkin48}, which is defined in terms of the Catalan numbers\footnote{Sequence A000108 on OEIS, which also counts abstract binary trees.} $C_k := \frac{1}{k+1}\binom{2k}{k}$.
It is thus easy to compute $\delta_{i,j} = M_iM_{j+1} - M_{i+1}M_j$ since, as always, we have to worry only about embeddings that use the edge $\{h_1,h_2\}$.
This gives $\lambda_{l,r} = M_lM_r$ and $\gamma_n = M_{n+1} + \frac{1}{2} \sum_{l,r} M_lM_r$.
We can further get rid of the sum in $\gamma_n$ by using the equation $M_{n+1} = M_n + \sum_{l,r} M_lM_r$, which holds because a crossing-free matching on $\Pconv$ ($M_{n+1}$ possibilities) either leaves $w$ unmatched ($M_n$ possibilities) or it matches $w$ with one of the other $n$ points ($\sum_{l,r} M_lM_r$ possibilities).
Hence,
\begin{equation}
  \nb{\mathcal{G}}{P} = \frac{3M_{n+1} - M_n}{2} - \frac{1}{2} \sum_{h \in H} M_{\lside{h}}M_{\rside{h}} \hspace{1cm} \text{for $\mathcal{G} = \mathcal{M}$} \enspace.
\end{equation}

\subparagraph*{Spanning trees.}
Let $\mathcal{G} = \mathcal{ST}$, the set of all trees over $n+1$ vertices.
We will make use of the fact that $\nb{\mathcal{ST}}{\Pconv} = T_{n+1} := \frac{1}{2n+1} \binom{3n}{n}$ \cite{DulucqP93,FlajoletN99}.\footnote{Sequence A001764 on OEIS, which, incidentally, also counts abstract ternary trees.}
Furthermore, we will use the hypergeometric identity $\sum_{k=0}^n T_{k+1}T_{n-k+1} = \frac{1}{n+1} \binom{3n+1}{n}$.\footnote{Sequence A006013 on OEIS, which also counts \emph{pairs} of abstract ternary trees, implying the used identity since the left hand side is just the convolution of the counting sequence of abstract ternary trees.}

To compute $\delta_{i,j}$, consider the set $P$ as in Figure \ref{fig:mutation}.
In order to complete the left hand side into a spanning tree, we have to build two disjoint trees rooted at $h_1$ and $h_2$, respectively.
There are $2$ choices for assigning $w$ either to the upper or the lower tree, and there are $i+1$ choices for distributing the $i$ points on the left among the two trees.
Indeed, the latter claim holds because the $k$ out of $i$ points assigned to $h_1$, say, have to appear consecutively with $h_1$ on the convex hull as otherwise we are forced to create a crossing.
Once the points have been distributed, we are left with two point sets of size $k+1$ and $i-k+2$ in convex position.
For completing the right hand side into a spanning tree, a simpler argument can be used without the additional complication of $w$.
Moreover, the set $P'$ can be handled analogously, which yields
\begin{equation}\label{eq:delta_spanning_tree}
\begin{aligned}
      \delta_{i,j}
  & = 2 \sum_{k=0}^j T_{k+1}T_{j-k+2} \cdot \sum_{k=0}^i T_{k+1}T_{i-k+1}
    - 2 \sum_{k=0}^i T_{k+1}T_{i-k+2} \cdot \sum_{k=0}^j T_{k+1}T_{j-k+1} \\
  & = 2 \left( \frac{2}{j+2}\binom{3j+3}{j} \cdot \frac{1}{i+1}\binom{3i+1}{i}
      - \frac{2}{i+2}\binom{3i+3}{i} \cdot \frac{1}{j+1}\binom{3j+1}{j}\right) \enspace.
\end{aligned}
\end{equation}

For this application, we do not know if a simple closed form expression for $\lambda_{l,r}$ exists.
Still, note that if one were to compute $\nb{\mathcal{ST}}{P}$, the numbers $\delta_{i,j}$ could be summed up using $O(n)$ arithmetic operations and the value of $\gamma_n$ could be computed on the fly for any given $n$.

\subparagraph*{Related applications.}
Observe that, for example, a geometric triangulation of $\Pconv$ can be embedded as a plane graph on $\Pbarely$.
However, this embedding is no longer a triangulation (i.e., a tessellation of the convex hull of $\Pbarely$ into triangles).
Hence, there is no natural choice of $\mathcal{G}$ such that $\nb{\mathcal{G}}{P}$ is the number of triangulations of any conowheel set~$P$.

Nevertheless, the continuous motion argument is still applicable and leads to a similarly simple formula.
All that is required is an adapted version of Lemma~\ref{lem:mutation} that shows that the values $\delta_{i,j}$ (i.e., the increase in the number of triangulations as $w$ moves about) are still well-defined and only dependent on $i$ and $j$.
From the description in \cite{Randall01} it follows that $\delta_{i,j} = C_{i}C_{j+1} - C_{i+1}C_{j}$, from which we get
\begin{equation}
  \mathop{\mathsf{tr}}(P) = \frac{C_n}{2} - \frac{1}{2} \sum_{h \in H} C_{\lside{h}}C_{\rside{h}} 
\end{equation}
for the number of triangulations of a conowheel set $P$ by using the techniques developed earlier in this section.
The above formula has been mentioned already in \cite{RuizW15}.

There are several other families of geometric graphs (pseudo-triangulations, crossing-free convex partitions, etc.)
whose quantity on a conowheel set $P$ cannot be expressed easily in the form $\nb{\mathcal{G}}{P}$, 
but for which it is also possible to adapt Lemma~\ref{lem:mutation}.
We provide the example for crossing-free convex partitions in Appendix~\ref{sec:convex_partitions}.

Furthermore, we note that Lemma~\ref{lem:mutation} and, hence, Theorem~\ref{thm:ggraph} generalizes to crossing-free embeddings of hypergraphs, where ``crossing-free'' means that the convex hulls of any two hyperedges intersect in an at most 1-dimensional set.
In similar fashion to the original proof, we observe that an embedding of a hypergraph is crossing-free on $P$ if and only if it is crossing-free on $P'$, as long as there is no hyperedge that contains both $h_1$ and $h_2$.
Therefore, we again get a clean separation between the left and right hand side for all embeddings that matter.

\subsection{The Symmetric Configuration Maximizes}

For many families of crossing-free geometric graphs, it is known that a set of points in convex position minimizes their number.
In particular, this is true for the family of all crossing-free geometric graphs, connected crossing-free geometric graphs, as well as for any family of cycle-free graphs such as (perfect) matchings or spanning trees~\cite{AichholzerHHHKV07}.
Triangulations, on the other hand, are a well-known counter-example for which it is known that convex position does not minimize~\cite{almost_convex}.

Not much seems to be known about point sets that maximize the number of crossing-free geometric graphs;
there are merely constructions that give lower bounds on this number~\cite{AichholzerHHHKV07,bounds_multiplicity,production}.
Nevertheless, in what follows we give a sufficient condition which allows us to prove that among all conowheel sets, the symmetric wheel set $\Psym$ maximizes the number of many families of crossing-free geometric graphs.
Recall that $\Psym$ is constructed by taking the vertex set of a regular $n$-gon and by adding an extra point $w$ at the center of that $n$-gon.
For the case that $n$ is even, we slightly perturb $w$ in order to obtain a point set in general position.
Irrespective of the perturbation we get the same order type and, thus, the same frequency vector.

\begin{lemma}\label{lem:max}
  Let $\mathcal{G}$ be any family of graphs for which $\delta_{i,j} \geq 0$ holds for all $i < j$.
  Then, the number $\nb{\mathcal{G}}{P}$ is maximized for $P = \Psym$.
\end{lemma}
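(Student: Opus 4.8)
We want to show that if $\delta_{i,j} \geq 0$ for all $i < j$, then $\nb{\mathcal{G}}{P}$ is maximized by the symmetric wheel set $\Psym$. The value $\delta_{i,j}$ is the increment in $\nb{\mathcal{G}}{\cdot}$ when $w$ crosses a segment $h_1 h_2$ from the side with $i$ points to the side with $j$ points. The condition $\delta_{i,j} \geq 0$ for $i < j$ says: whenever $w$ moves from a "more crowded" side (fewer points being passed, $i$ small) — wait, let me be careful. $\delta_{i,j} \ge 0$ for $i<j$ means that moving $w$ from the $i$-side to the $j$-side (so $w$ ends up on the side that had $j$ points, i.e., $w$ moves toward the crowd) increases the count. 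So the count grows as $w$ moves toward configurations that are "more balanced/central." This strongly suggests the central (symmetric) configuration is the maximizer.

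**Plan.** Think about the complete graph $K_{n+1}$ on the abstract point set, whose edges correspond to the $\binom{n}{2}$ segments spanned by points of $H$. As we continuously move $w$ through the plane (keeping $H$ fixed, or equivalently moving through the arrangement of these segments), $w$ lives in one of the cells of the line arrangement spanned by $H$. Crossing a segment $h_1 h_2$ corresponds to a mutation as in Figure~\ref{fig:mutation}, changing $\nb{\mathcal{G}}{P}$ by $\pm\delta_{i,j}$. The key structural fact (from Lemma~\ref{lem:mutation} and Theorem~\ref{thm:ggraph}) is that the change depends only on $(i,j)$, the split of the remaining $n-2$ points. So I would argue that $\nb{\mathcal{G}}{P}$ can be written, up to an additive constant, as a sum over all $\binom{n}{2}$ segments of a contribution determined by how $w$ sits relative to each segment. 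Concretely, using $\lambda_{l,r}$ from Theorem~\ref{thm:ggraph}, we have $\nb{\mathcal{G}}{P} = \gamma_n - \frac12\sum_{h\in H}\lambda_{\lside{h},\rside{h}}$; since $\gamma_n$ is fixed by $H$, maximizing $\nb{\mathcal{G}}{P}$ is equivalent to \emph{minimizing} $\sum_{h\in H}\lambda_{\lside{h},\rside{h}}$.

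**Key step: monotonicity of $\lambda$.** From the recurrence $\lambda_{l,r} - \lambda_{l+1,r-1}$ worked out in the proof of Theorem~\ref{thm:ggraph}, the hypothesis $\delta_{i,j}\ge 0$ for $i<j$ should translate into a convexity/monotonicity statement: $\lambda_{l,r}$ with $l+r$ fixed is minimized when $l$ and $r$ are as balanced as possible. Recall $\lambda_{l,r} = \delta_{n-2,0} + \dots + \delta_{l,r-1} + c_n$. I would show that $\delta_{i,j} \ge 0$ for $i<j$ (equivalently $\delta_{i,j}\le 0$ for $i>j$, by skew-symmetry) forces each summand $\lambda_{\lside{h},\rside{h}}$ to be smallest when $|\lside{h}-\rside{h}|$ is smallest. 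The symmetric set $\Psym$ has frequency vector $(n,0,\dots,0)$ (as noted, $F(\Psym)=(7,0,\dots,0)$ for $n=7$), meaning every $h$ achieves $|\lside{h}-\rside{h}|=0$ (or $1$ when $n$ is even) — the extreme point of balance. Hence $\Psym$ simultaneously minimizes every term $\lambda_{\lside{h},\rside{h}}$, and therefore minimizes the sum, and therefore maximizes $\nb{\mathcal{G}}{P}$.

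**Main obstacle.** The delicate part is justifying that balancing each individual $h$ is simultaneously achievable and that it genuinely minimizes each $\lambda$-term. I expect the clean route is via the second form in Theorem~\ref{thm:ggraph}, $\nb{\mathcal{G}}{P}=\sum_k F_k\Lambda_k$: one shows $\delta_{i,j}\ge 0$ for $i<j$ implies the sequence $\Lambda_k$ is non-increasing in $k$, so that shifting "frequency mass" toward small $k$ (as $\Psym$ does maximally) increases the weighted sum. Then the final step is to verify, using Lemma~\ref{lem:freqvecchar}, that among all admissible frequency vectors the one of $\Psym$ — all mass concentrated on $F_0$ (odd $n$) or $F_1$ (even $n$) — is exactly the distribution that maximizes $\sum_k F_k \Lambda_k$ under a non-increasing $\Lambda_k$. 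The main work is the sign/monotonicity bookkeeping relating the hypothesis $\delta_{i,j}\ge0$ to the ordering of the $\Lambda_k$; the geometric claim that $\Psym$ realizes the fully-balanced frequency vector is immediate from its definition.
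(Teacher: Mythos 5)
Your argument is correct, but it is genuinely different from the one in the paper. You deduce the result algebraically from Theorem~\ref{thm:ggraph}: the defining relation $\delta_{i,j}=\lambda_{i,j+1}-\lambda_{i+1,j}$ together with the symmetry $\lambda_{l,r}=\lambda_{r,l}$ shows that, for $l\geq r$ and $l+r=n-1$, one has $\lambda_{l+1,r-1}-\lambda_{l,r}=\delta_{r-1,l}\geq 0$, so $\lambda_{l,r}$ is non-decreasing as the split becomes more unbalanced; hence each summand $\lambda_{\lside{h},\rside{h}}$ is individually minimized at the balanced split, which $\Psym$ realizes simultaneously for every $h\in H$, and since $\sum_k F_k=n$ the bound $\nb{\mathcal{G}}{P}\leq n\Lambda_{k_0}=\nb{\mathcal{G}}{\Psym}$ follows (with $k_0=0$ or $1$ the smallest admissible index). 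The ``bookkeeping'' you defer is a one-line computation, so there is no gap. The paper instead argues dynamically: it moves $w$ along a straight line from convex position to its target and compares the two trajectories, using that a transition of type $(i,j)$ with $i<j$ occurs exactly $i+1$ times on the way to the center of $\Psym$ but at most $i+1$ times on the way to an arbitrary $P$, while transitions with $i>j$ only hurt. Your route is cleaner once Theorem~\ref{thm:ggraph} is available and isolates the real content (unimodality of $\lambda$ along each antidiagonal, equivalently monotonicity of $\Lambda_k$); the paper's route is self-contained at the level of the mutation argument and also hands you the explicit identity $\nb{\mathcal{G}}{\Psym}=\nb{\mathcal{G}}{\Pconv}+\sum_{i<j}(i+1)\delta_{i,j}$ as a byproduct. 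One caution: the paper's displayed conversion $\Lambda_k=\gamma_n/n+\tfrac12\lambda_{(n+k-1)/2,(n-k-1)/2}$ has the wrong sign on the $\lambda$-term (it must be $-\tfrac12\lambda$ for the two formulas in Theorem~\ref{thm:ggraph} to agree); you implicitly use the correct sign by phrasing everything as minimizing $\sum_{h}\lambda_{\lside{h},\rside{h}}$, so your monotonicity direction for $\Lambda_k$ is the right one, but anyone substituting the printed formula literally would get the inequality reversed.
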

\begin{proof}
  We start with a point set that contains the vertices of a regular $n$-gon and an extra point $w$ added in such a way that the whole set is in convex position.
  Naturally, the number of non-crossing embeddings of $\mathcal{G}$ on this set is $\nb{\mathcal{G}}{\Pconv}$.
  
  The idea now is to let $w$ move on a straight line towards its final position in $\Psym$ and to observe how the number of non-crossing embeddings changes.
  We obtain
  \begin{align*}
    \nb{\mathcal{G}}{\Psym} = \nb{\mathcal{G}}{\Pconv} + \sum_{\substack{i,j \colon i < j \\ i+j = n-2}} (i+1) \cdot\delta_{i,j} \enspace,
  \end{align*}
  where the factor $i+1$ is due to the fact that we can draw $i+1$ segments with endpoints in $H$ that contain $w$ and $i$ other points from $H$ on one side, thus giving rise to exactly $i+1$ transitions that increase the number of embeddings by $\delta_{i,j}$.
  Also note that, for the case that $n$ is even, we always have $\delta_{i,j} = 0$ whenever $i=j$ by symmetry.
  Hence, we need not worry about how exactly $w$ was perturbed in $\Psym$.
  
  Let now $P$ be an arbitrary conowheel set, for which we wish to prove that $\nb{\mathcal{G}}{P} \leq \nb{\mathcal{G}}{\Psym}$ holds.
  Here, we start with a point set that is a copy of $P$ except that the extra point $w$ has been moved in such a way that the whole set is again in convex position.
  In the same way as before, we let $w$ move on a straight line to its final position in $P$.
  Here, we obtain
  \begin{align*}
    \nb{\mathcal{G}}{P} = \nb{\mathcal{G}}{\Pconv} +
    \sum_{\substack{i,j \colon i < j \\ i+j = n-2}} \enspace \underbrace{\alpha(i)}_{\leq i+1} \cdot \underbrace{\delta_{i,j}}_{\geq 0} +
    \sum_{\substack{i,j \colon i > j \\ i+j = n-2}} \alpha(i) \cdot \underbrace{\delta_{i,j}}_{\leq 0}
    \leq \nb{\mathcal{G}}{\Psym} \enspace.
  \end{align*}
  In the formula above, the number $\alpha(i)$ indicates, for any fixed $i$, how often a $(i,j)$-transition occurs during this process.
  The inequality $\alpha(i) \leq i+1$ for $i<j$ follows again because we can draw $i+1$ segments with endpoints in $H$ that contain $w$ and $i$ other points from $H$ on one side, thus giving rise to at most $i+1$ transitions that increase the number of embeddings by $\delta_{i,j}$.
\end{proof}

  It is easy to see that the condition of Lemma~\ref{lem:max} holds for $w$-embracing triangles.
  By making use of the fact that Motzkin numbers are log-concave, i.e., $M_i^2 \geq M_{i-1}M_{i+1}$ \cite{Aigner98}, it can also be shown to hold for crossing-free matchings.
  For crossing-free spanning trees, on the other hand, starting from equation~(\ref{eq:delta_spanning_tree}) for $\delta_{i,j}$ one can derive the equivalence
  \begin{align*}
    \delta_{i,j} \geq 0 \; \Longleftrightarrow \; \frac{(i+1)(i+\frac{2}{3})}{(i+2)(i+\frac{3}{2})} \leq \frac{(j+1)(j+\frac{2}{3})}{(j+2)(j+\frac{3}{2})} \enspace.
  \end{align*}
  Assuming $i < j$, we see that the factor $\frac{i+1}{i+2}$ is dominated by the factor $\frac{j+1}{j+2}$.
  The same can be said for the other two factors, and hence we conclude $\delta_{i,j} \geq 0$.

  \begin{corollary}
    When restricted to conowheel sets, the numbers of $w$-embracing triangles, crossing-free matchings and crossing-free spanning trees are maximized for $\Psym$.
  \end{corollary}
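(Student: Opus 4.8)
The plan is to reduce everything to Lemma~\ref{lem:max}: that lemma already guarantees maximization at $\Psym$ as soon as $\delta_{i,j} \geq 0$ holds for all $i < j$, so it suffices to verify this single inequality separately for each of the three families. All three expressions for $\delta_{i,j}$ have been determined earlier in this section, so what remains in each case is a monotonicity argument. For $w$-embracing triangles ($\mathcal{G} = \{\fillup{K_4}\}$) the check is immediate, since we computed $\delta_{i,j} = j - i$, which is strictly positive whenever $i < j$.

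For matchings we have $\delta_{i,j} = M_i M_{j+1} - M_{i+1} M_j$, so $\delta_{i,j} \geq 0$ is equivalent to $M_{i+1}/M_i \leq M_{j+1}/M_j$. I would therefore show that the sequence of consecutive ratios $M_{k+1}/M_k$ is non-decreasing, which is precisely the log-convexity $M_k^2 \leq M_{k-1}M_{k+1}$ of the Motzkin numbers~\cite{Aigner98}. With the ratios monotone, $i < j$ yields $M_{i+1}/M_i \leq M_{j+1}/M_j$ and hence $\delta_{i,j} \geq 0$.

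For spanning trees I would start from the closed form of $\delta_{i,j}$ in equation~(\ref{eq:delta_spanning_tree}), clear the binomial coefficients, and separate the variables so that the sign condition becomes $\delta_{i,j} \geq 0 \iff \phi(i) \leq \phi(j)$, where $\phi(x) = \frac{(x+1)(x+\frac{2}{3})}{(x+2)(x+\frac{3}{2})}$. Writing $\phi(x) = \frac{x+1}{x+2}\cdot\frac{x+\frac{2}{3}}{x+\frac{3}{2}}$ exhibits it as a product of two positive maps of the form $x \mapsto (x+a)/(x+b)$ with $a < b$, each of which is increasing in $x$; a product of positive increasing functions is increasing, so $\phi$ is increasing and $i < j$ gives $\phi(i) \leq \phi(j)$, i.e.\ $\delta_{i,j} \geq 0$.

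Having established $\delta_{i,j} \geq 0$ for $i < j$ in all three cases, I would invoke Lemma~\ref{lem:max} once per family to conclude that each of the three counts is maximized at $P = \Psym$. I expect the spanning-tree step to be the only genuine obstacle: turning equation~(\ref{eq:delta_spanning_tree}) into the factored form for $\phi$ requires carefully cancelling the ratio $\binom{3x+1}{x}/\binom{3x+3}{x} = \frac{(2x+3)(2x+2)}{(3x+3)(3x+2)}$ and simplifying, and it is the one place where the monotone structure is not visible by inspection. The triangle and matching cases, by contrast, are essentially one-line monotonicity checks.
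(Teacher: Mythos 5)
Your proof is correct and follows essentially the same route as the paper: verify $\delta_{i,j}\ge 0$ for all $i<j$ in each of the three cases and invoke Lemma~\ref{lem:max}, using the trivial check for triangles, a monotone-ratio argument for Motzkin numbers, and the identical two-factor monotonicity argument for the spanning-tree quantity $\phi$. One small point in your favor: the inequality actually needed for matchings is the log-convexity $M_k^2 \le M_{k-1}M_{k+1}$ exactly as you state it (and this is what holds for Motzkin numbers, e.g.\ $M_3^2 = 16 < 18 = M_2M_4$), whereas the paper writes the reversed inequality and labels it log-concavity, which appears to be a typo.
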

  
  Note that the above does not hold for crossing-free perfect matchings.
  This special case has been analyzed in \cite{RuizW15}.


\subsection{Wheel Sets and the Rectilinear Crossing Number}\label{sec:crossing_number}

Even though conowheel sets and the associated frequency vectors seem like a very specific set of objects, we show here that they sometimes can be useful also in a more general setting, where we are given an arbitrary point set in the plane.

Consider for example any set $\tilde P$ of $n+1$ points in general position and let $\Box$ and $\triangle$ be the number of $4$-element subsets of $\tilde P$ in convex and in non-convex position, respectively.%
\footnote{The number $\Box$ is also the number of crossings of the complete geometric graph on $\tilde P$, a quantity that has obtained special attention in connection with the so-called \emph{rectilinear crossing number of $K_n$} (i.e., the smallest number of crossings in a straight line drawing of the complete graph in the plane).}
Let $p \in \tilde P$ be any point.
We can construct a conowheel set $P = H \cup \{w\}$ that contains $w = p$ and, for every $q \in \tilde P \setminus \{p\}$, the point $h$ which lies on the intersection of the ray from $p$ to $q$ and a fixed circle centered at $w$ (as done also, e.g., in~\cite{khuller_mitchell}).
That is, $P$ is simply a representation of the local sequence \cite{semispaces} of $p$ in $\tilde P$ in terms of conowheel sets.
Further observe that a triangle spanned by points in $\tilde P$ contains $p$ if and only if its image in $P$ contains $w$.
Hence, $\nb{\{\fillup{K_4}\}}{P}$ is the number of such triangles, which is given by equation~(\ref{eq:embtrg}).
We thus obtain $\triangle$ by a summation over all points $p$ in~$\tilde P$.
Since $\Box + \triangle = \binom{n+1}{4}$, we also obtain~$\Box$.

To sum up, we can associate a frequency vector with every point of a given point set, and this set of frequency vectors then determines the value of~$\Box$.
Unfortunately, this simple approach does not work in general for other interesting quantities;
there are examples of point sets with the same set of frequency vectors but a different number of triangulations, see Appendix~\ref{sec:counterexamples1}.

As a further side remark, we note that what we did above gives rise to the same equations from~\cite{AbregoF05,LovaszVWW04,Wagner03} that express $\Box$ in terms of the number of $j$-edges (i.e., directed edges spanned by $\tilde P$ with exactly $j$ points of $\tilde P$ to their left).


\section{Embracing Sets}
\label{sec:originembracing}

It turns out that $w$-embracing triangles, the running example from the previous section, offer more than meets the eye at first sight.
We show here that mere information about the structure (or number) of $w$-embracing triangles (or larger $w$-embracing sets) is enough to uniquely determine the order type (or frequency vector) of a conowheel set.
Ultimately, this leads to a converse of Theorem~\ref{thm:ggraph}.


\subsection{Embracing Triangles determine Order Type}
\label{sec:embrordtype}

By the \emph{family of $w$-embracing triangles} in a set $H = \{h_1,\dots,h_n\}$ we mean the set $T \subseteq \binom{[n]}{3}$ where $\{i,j,k\}$ is contained in $T$ if and only if $h_i,h_j,h_k$ span a $w$-embracing triangle.
For ease of notation, we identify the index set $[n] = \{1,\dots,n\}$ with $H$ and write $ijk$ short for an unordered triple in $T$.

\begin{lemma}\label{lem:trgs_det_ordtype}
Let $P = H \cup \{w\}$ be a conowheel set.
Then, the family $T$ of $w$-embracing triangles in $H$ determines the order type of $P$.
\end{lemma}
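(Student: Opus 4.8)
The plan is to reduce the statement to the reconstruction of the \emph{local sequence} of $w$, which in the proof of Theorem~\ref{thm:ordtype} was shown to be equivalent to the order type of $P$. Concretely, the order type is determined (up to the global reflection under which both the order type and the family $T$ are invariant) by the data of which points lie to the left of each directed line $w\to h_i$; that is, by the bipartition of $H\setminus\{h_i\}$ into the $\lside{h_i}$ points on the left and the $\rside{h_i}$ points on the right, recorded for every $i$. Equivalently, it suffices to recover all signs $\chi(w,h_i,h_j)$. So the whole task is to extract these $n$ bipartitions from the purely combinatorial family $T$.

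First I would record the convexity characterization of membership in $T$: a triple $ijk$ embraces $w$ if and only if the three points lie in no closed halfplane bounded by a line through $w$, equivalently if and only if each of the three lines $wh_i$, $wh_j$, $wh_k$ separates the other two points. A short case check yields the sharper dichotomy: if $ijk\in T$ then all three separations hold, whereas if $ijk\notin T$ the triple lies in an open halfplane through $w$ and then exactly one of the three lines (through the ``middle'' point) separates the other two, while the two ``extreme'' points' lines do not. In particular $ijk\in T$ forces $h_j$ and $h_k$ onto opposite sides of the line $wh_i$, so every embracing triple constrains the bipartitions we are after.

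Next I would attempt to read off the bipartition at each $i$ directly: form the bipartite \emph{embracing graph} $G_i$ on vertex set $H\setminus\{h_i\}$ whose edges are the pairs $\{j,k\}$ with $ijk\in T$. By the previous paragraph every edge of $G_i$ joins the left part to the right part, so $G_i$ is bipartite, and its unique $2$-colouring would hand us the bipartition up to swapping the two sides. Here lies the main obstacle: $G_i$ need not be connected, since a point whose direction from $w$ is very close to that of $h_i$ may complete $h_i$ to no embracing triple at all and hence sit isolated in $G_i$; for such a point the local data of $T$ does not by itself decide on which side of $wh_i$ it lies. The per-triple information in $T$ is thus genuinely ambiguous, and the difficulty is to remove this ambiguity using the global structure.

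To overcome this I would reconstruct the cyclic order of the directions of $H$ around $w$ before committing to the sides. Using the separation dichotomy one can detect radially extreme and radially consecutive points (a non-embracing triple exposes its middle point), and peeling these off lets one rebuild the $2n$-bead self-dual necklace from the proof of Theorem~\ref{thm:ordtype} bead by bead; inserting the antipodal markers then determines each bipartition and hence the order type. An alternative route, closer in spirit to Section~\ref{sec:ggraph}, is to argue that the assignment $P\mapsto T$ is injective at the level of order types: any two conowheel sets of the same size are linked by a chain of mutations, each moving $w$ across a line $h_ah_b$, and such a mutation changes the status only of triples containing both $h_a$ and $h_b$ (by Lemma~\ref{lem:mutation}, exactly the pencil $\{h_a,h_b,\ast\}$); one then checks that this always alters $T$ and turns the local bookkeeping into a reconstruction. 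In either approach the crux is converting the reflection-invariant, per-triple Boolean data of $T$ into the global cyclic-plus-antipodal information carried by the local sequence.
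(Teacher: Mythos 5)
Your high-level plan (reduce to the local sequence of $w$, use the fact that an embracing triple forces each of the three lines $wh_i$ to separate the other two points) is sound, and you correctly put your finger on the real difficulty: a point radially very close to $h_i$ may occur in no embracing triple together with $h_i$, so the per-triple data of $T$ does not locally decide its side. But the proposal does not actually close the gap it identifies, and in two places the plan as stated would fail. First, your reduction target is too strong: the signs $\chi(w,h_i,h_j)$ are \emph{not} determined by $T$. If $h_j$ and $h_{j'}$ are consecutive in the radial order around $w$ with no point of $H$ in the antipodal arc between them (the paper's relation $j\sim j'$), then transposing them preserves $T$ but flips $\chi(w,h_j,h_{j'})$; so no procedure can recover all these signs, and your bipartite graphs $G_i$ and the necklace are determined only up to permutations within such classes. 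What rescues the lemma is the additional observation --- made explicitly in the paper and absent from your proposal --- that any two internal orderings of such a class are related by an orientation-preserving bijection and hence yield the same order type. Second, the step ``peeling off radially extreme/consecutive points lets one rebuild the necklace bead by bead; inserting the antipodal markers then determines each bipartition'' is an assertion, not an argument: knowing the cyclic order of the $n$ directions does not tell you where the $n$ antipodal beads interleave --- that interleaving \emph{is} the bipartition data you are trying to recover, so as written this step is close to circular. Your alternative route is also flawed: showing that each single mutation changes $T$ gives injectivity along one path of mutations, not injectivity of the map $P\mapsto T$ on order types, since two different order types could a priori produce the same $T$ via different chains.

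For contrast, the paper avoids the global reconstruction entirely: it fixes one embracing triangle $abc$, orients it, and propagates an orientation to every other embracing triangle $pqr$ through an intermediate embracing triangle such as $apq$ (whose orientation is read off from whether $abq$ or $apc$ is embracing); it then introduces the equivalence relation $\sim$ above, shows that the oriented embracing triangles determine the cyclic order of the equivalence classes, and disposes of the remaining within-class ambiguity by the order-type-invariance observation. If you want to salvage your necklace-reconstruction route, you would need (i) a concrete combinatorial procedure that recovers the cyclic order of the classes from the ``middle point'' information of non-embracing triples together with the edges of the graphs $G_i$, and (ii) the explicit argument that the within-class ambiguity is harmless.
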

\begin{proof}
If there are no $w$-embracing triangles in $T$, then $P$ is in convex position and we are done.
Thus, we may assume that $w$ is not an extreme point.
We may now fix any $w$-embracing triangle $abc$ in $T$.
Without loss of generality, let the triangle $abc$ be oriented counterclockwise; otherwise, relabel the points.
We will determine the orientation of all other point triples in $P$ with respect to the orientation of $abc$.

Let $pqr$ be any other $w$-embracing triangle in $T$ for which we wish to determine the orientation.
Without loss of generality, $pqr$ does not share any vertices with $abc$; if it does, simply duplicate and slightly perturb the corresponding vertices in $pqr$ while maintaining the wheel set property.
Observe now, in Figure~\ref{fig:trgs_det_ordtype_cases}, that in any case there must be a third $w$-embracing triangle, say $apq$, that has one vertex of the already oriented triangle and two vertices of the other.
Observe further that the triangle $apq$ is oriented counterclockwise if and only if at least one of $abq$ and $apc$ is a $w$-embracing triangle.
Having obtained the orientation of the edge $pq$ with respect to $w$, it is now easy to orient the original $w$-embracing triangle $pqr$.
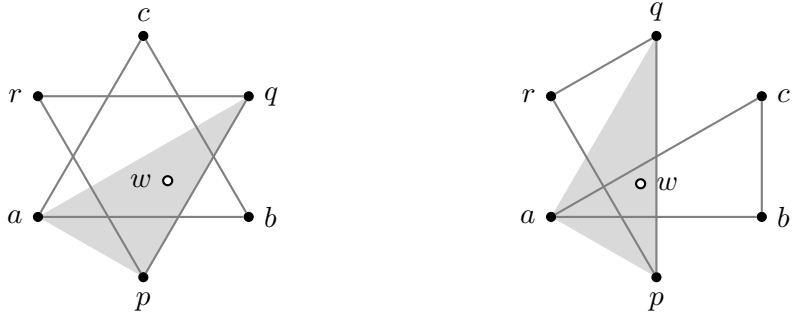
\begin{figure}
  \begin{center}
    \begin{tikzpicture}[scale=1.6]
      \begin{scope}[xshift=0]
        \node[Hpt,label=left:$a$] (a) at (210:1) {};
        \node[Hpt,label=right:$b$] (b) at (330:1) {};
        \node[Hpt,label=above:$c$] (c) at (90:1) {};

        \node[Hpt,label=below:$p$] (p) at (270:1) {};
        \node[Hpt,label=right:$q$] (q) at (30:1) {};
        \node[Hpt,label=left:$r$] (r) at (150:1) {};
        
        \fill[black,opacity=0.15] (a.center) -- (p.center) -- (q.center) -- cycle;
        
        \node[Wpt,label=left:$w$] (w) at (0.2,-0.2) {};
        
        \draw[sptline3] (a) -- (b);
        \draw[sptline3] (b) -- (c);
        \draw[sptline3] (c) -- (a);
        
        \draw[sptline3] (p) -- (q);
        \draw[sptline3] (q) -- (r);
        \draw[sptline3] (r) -- (p);
      \end{scope}
      \begin{scope}[xshift=120]
        \node[Hpt,label=left:$a$] (a) at (210:1) {};
        \node[Hpt,label=right:$b$] (b) at (330:1) {};
        \node[Hpt,label=right:$c$] (c) at (30:1) {};

        \node[Hpt,label=below:$p$] (p) at (270:1) {};
        \node[Hpt,label=above:$q$] (q) at (90:1) {};
        \node[Hpt,label=left:$r$] (r) at (150:1) {};
        
        \fill[black,opacity=0.15] (a.center) -- (p.center) -- (q.center) -- cycle;
        
        \node[Wpt,label=right:$w$] (w) at (240:0.26) {};
        
        \draw[sptline3] (a) -- (b);
        \draw[sptline3] (b) -- (c);
        \draw[sptline3] (c) -- (a);
        
        \draw[sptline3] (p) -- (q);
        \draw[sptline3] (q) -- (r);
        \draw[sptline3] (r) -- (p);
      \end{scope}
    \end{tikzpicture}
  \end{center}
  \caption{Two $w$-embracing triangles $abc$ and $pqr$, as in the proof of Lemma~\ref{lem:trgs_det_ordtype}. In the first case, either $apq$ or $aqr$ is also $w$-embracing; in the second case, $apq$ is also $w$-embracing.}
  \label{fig:trgs_det_ordtype_cases}
\end{figure}

Before continuing with the remaining unoriented triangles, we define the following equivalence relation $\sim$ over the index set $[n]$.
We put $i \sim i'$ if and only if
\begin{equation*}
  \forall j,k \in [n] \colon \enspace ijk \in T \Longleftrightarrow  i'jk \in T \enspace .
\end{equation*}
Intuitively, equivalence classes defined by $\sim$ are sequences of points that appear consecutively in the radial order around $w$ without any points on the other side; or, referring back to Figure~\ref{fig:bijection}, consecutive sequences of black beads in the corresponding necklace.

The already determined orientation of the $w$-embracing triangles in $T$ yields the relative position of $w$ and any two such equivalence classes; indeed, given representatives $i$ and $j$ of two distinct equivalence classes, one can always find a third point $k$ to form a $w$-embracing triangle.
Hence, we have determined the order of the defined equivalence classes along the boundary of the convex hull of $P$.

It only remains to determine the internal order of the points in any given equivalence class along the boundary of the convex hull.
However, since between any two possible choices of such internal orders there is an orientation-preserving bijection, all resulting point sets have the same order type.
\end{proof}


\subsection{Numbers of Embracing Sets determine Frequency Vector}
\label{sec:embrfreqvec}

Let us generalize the notion of $w$-embracing triangles to larger sets.
A subset $A \subseteq H$ is a \emph{$w$-embracing $k$-set} if $w$ is in the convex hull of $A$ and $|A| = k$.
For simplicity, we consider here only $w = \mathbf{0}$ and call $A$ an \emph{origin-embracing $k$-set}, or simply \emph{embracing $k$-set}.

We can show that the number of embracing $k$-sets is determined by the frequency vector of a conowheel set $P = H \cup \{w\}$ for any $k$, and not just for $k=3$ as seen earlier in equation~(\ref{eq:embtrg}).
Indeed, since $H$ is in general position, for every non-embracing $k$-set $A \subseteq H$ there exists a unique point $h \in A$ such that the convex hull of $A$ is in the closed halfplane to the left of the directed line through $w$ and $h$.
For any possible choice of that point $h \in H$ we can construct $\binom{\lside{h}}{k-1}$ such non-embracing $k$-sets, and thus it is possible to get a generalization of equation (\ref{eq:embtrg}) for $\embr{k}$, the number of embracing $k$-sets:
\begin{equation}\label{eq:embracing_plane}
\embr{k} = \binom{n}{k} - \sum_{h \in H} \binom{\lside{h}}{k-1} \enspace.
\end{equation}

Interestingly, the converse also turns out to be true.

\begin{lemma}
\label{lem:embracing_implies_frequency}
Let $P = H \cup \{w\}$ be a conowheel set of size $n+1$.
Then, the sequence $(\embr{k})_{k = 3}^n$ determines the frequency vector of $P$.
\end{lemma}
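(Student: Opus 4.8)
The plan is to use equation~(\ref{eq:embracing_plane}) to translate the numbers $\embr{k}$ into information about the multiset of values $\{\lside{h} : h \in H\}$, and then to observe that the frequency vector is a function of that multiset alone. Writing $N_\ell := |\{h \in H : \lside{h} = \ell\}|$ for $\ell \in \{0,\dots,n-1\}$ and recalling $\rside{h} = n-1-\lside{h}$, we have $|\lside{h}-\rside{h}| = |2\lside{h}-(n-1)|$, so that $F_k = \sum_{\ell \,:\, |2\ell-(n-1)| = k} N_\ell$. Hence it suffices to show that the counts $N_\ell$ are determined by the sequence $(\embr{k})_{k=3}^n$.

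First I would read equation~(\ref{eq:embracing_plane}) in the form $\sum_{h \in H}\binom{\lside{h}}{k-1} = \binom{n}{k} - \embr{k}$, so that the binomial moment $B_j := \sum_{h\in H}\binom{\lside{h}}{j} = \sum_{\ell=0}^{n-1} N_\ell \binom{\ell}{j}$ is known for every $j = k-1 \in \{2,\dots,n-1\}$. Two further moments come for free: trivially $B_0 = n$, and $B_1 = \sum_{h \in H}\lside{h} = \binom{n}{2}$. The latter I would justify by a pairing argument: $B_1$ counts ordered pairs $(h,h')$ with $h'$ strictly to the left of the directed line through $w$ and $h$, and for each unordered pair $\{h,h'\}$ exactly one of the two orientations of the triple $w,h,h'$ places the corresponding third point on the left, so each of the $\binom{n}{2}$ pairs contributes exactly once.

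With $B_0,B_1,\dots,B_{n-1}$ all in hand, the counts $N_\ell$ are determined: the relation $B_j = \sum_{\ell} N_\ell \binom{\ell}{j}$ is a triangular linear system, since $\binom{\ell}{j}=0$ for $\ell<j$ and $\binom{j}{j}=1$, and can be solved by downward induction on $j$, recovering $N_{n-1},N_{n-2},\dots,N_0$ in turn; equivalently one applies the binomial inversion $N_\ell = \sum_{j\geq\ell}(-1)^{j-\ell}\binom{j}{\ell}B_j$. Substituting the resulting $N_\ell$ into the expression for $F_k$ above then yields the frequency vector. The one step requiring genuine care, and the crux of the argument, is recognizing that the moments supplied by $\embr{3},\dots,\embr{n}$ cover only $j \in \{2,\dots,n-1\}$: completeness of the moment data hinges on independently pinning down $B_0$ and, especially, the first moment $B_1 = \binom{n}{2}$, without which the linear system would be one equation short of determining the $N_\ell$.
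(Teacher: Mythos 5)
Your proof is correct and follows essentially the same route as the paper's: both extend equation~(\ref{eq:embracing_plane}) down to $k=2$ (your moment $B_1=\binom{n}{2}$ is exactly the paper's observation that for $k=2$ both sides count pairs of points) and then invert the resulting unitriangular binomial system to recover the counts $N_\ell = |\{h \in H : \lside{h}=\ell\}|$, which determine the frequency vector. There is no gap.
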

\begin{proof}
Let $E = (\embr{k})_{k=3}^n$. 
Consider the vector $L = (L_j)_{j=1}^{n-1}$ where $L_j$ is the number of points $h \in H$ with $\lside{h} = j$.
Clearly, $L$ determines the frequency vector $F(P)$.
It thus suffices to show that $E$ determines $L$.

Note that equation (\ref{eq:embracing_plane}) may now be rewritten as
\begin{equation*}
  \binom{n}{k} - \embr{k} = \sum_{j=1}^{n-1} L_j \binom{j}{k-1} \enspace.
\end{equation*}
Further note that the above equation also holds for $k = 2$, since in that case both sides of the equation count the number of pairs of points in $H$.
We can thus define a vector $E' = (e_i)_{i=1}^{n-1}$ with $e_i = \binom{n}{i+1} - \embr{i+1}$ and a square matrix $A = (a_{i,j})_{i,j=1}^{n-1}$ with $a_{i,j} = \binom{j}{i}$, such that the equality $E' = A L$ holds.
Since the matrix $A$ is unitriangular, it has an inverse, from which we can conclude that $E'$ determines $L$.
\end{proof}

\thmggraphconverse
\begin{proof}
We already know from Theorem~\ref{thm:ggraph} that the frequency vector determines the number of non-crossing embeddings.
For the other direction, the idea is to reconstruct the numbers of embracing $k$-sets by appropriately choosing the graph classes $\mathcal{G}$ for every $k$.
After that, the frequency vector is determined by Lemma~\ref{lem:embracing_implies_frequency}.

The number $\embr{3}$ of $w$-embracing triangles is equal to the number $\nb{\{\fillup{K_4}\}}{P}$ and therefore, by assumption, the same for both $P$ and $P'$.
We now simply generalize to $k$-wheels; that is, we consider a set $\mathcal{G}$ that contains a single cycle with $k$ vertices, each adjacent to one additional vertex.
All that is left to observe is that for such $\mathcal{G}$ the number $\embr{k}$ of embracing $k$-sets is equal to $\nb{\mathcal{G}}{P}$, and hence the same for both $P$ and $P'$.
\end{proof}

As a side remark, note that for arbitrary point sets we can compute the number of crossing-free embeddings of such $k$-wheels (as in the above proof) in polynomial time.
For $k=3$, this number is equal to the number of non-crossing embeddings of $\fillup{K_4}$, which can be obtained from the set of frequency vectors associated with each point, see Section~\ref{sec:crossing_number}.
For $k \geq 4$, we distinguish the cases where the geometric embedding of a $k$-wheel has only three vertices on the unbounded cell, and where it has $k$.
The latter case can be dealt with by computing the number of embracing $k$-sets for each point.
The former can be obtained by computing, for every integer $i$, the number of triangles with $i$ points in the interior, and then multiplying this number by $3\binom{i}{k-2}$.
This is because for every vertex of such a triangle, a path of $k-2$ points inside this triangle in radial order around that vertex gives a $k$-wheel with the triangle as the unbounded cell.
For each $i$, the corresponding number of triangles can be obtained in $O(n^3)$ time~\cite{akm93,eorw92}.


\section{Higher Dimensions: Origin-Embracing Simplices}
\label{sec:higher_dimensions}

As already noted in the introduction, the concept of conowheel sets can be generalized to higher dimensions.
However, already in $\R^3$ we face certain challenges.
For example, the number of tetrahedralizations of $n+1$ points in convex position in $\R^3$ does not only depend on~$n$, in contrast to the 2-dimensional case.
Even when considering simpler structures, like the set of $w$-embracing tetrahedra, the ideas from Section~\ref{sec:ggraph} do not generalize.
Intuitively, our argument of $w$ moving over a segment does not work in 3-space, as it can move ``around'' a triangle.

Still, we show here how to use similar ideas in order to obtain improved time bounds for computing the simplicial depth of a point~$w$ in a set $H$.

\subparagraph*{Oriented simplices.} Given a set $T$ of $d$ affinely independent points in $\R^d$, its convex hull $\conv{T}$ is a $(d-1)$-simplex and its affine hull is a hyperplane.
We want to be able to refer to the two sides of this hyperplane by identifying a positive and a negative side.
For that consider a sequence $p_1p_2\ldots p_d$ of $d$ affinely independent points.
The affine hull of $T=\{p_1,p_2\commadots p_d\}$ can be described as the set of points $q$ with $\side(q,p_1p_2\ldots p_d) = 0$, where 
\begin{equation*}
\side(q,p_1p_2\ldots p_d) :=\det(p_1-q,p_2-q\commadots p_d-q) \enspace .
\end{equation*}
We call the set of points $q$ with $\side(q,p_1p_2\ldots p_d) >0$ the \emph{positive side of $p_1p_2\ldots p_d$}, and the set of points $q$ with $\side(q,p_1p_2\ldots p_d) <0$ the \emph{negative side of $p_1p_2\ldots p_d$}.

%

An \emph{oriented $(d-1)$-simplex} is a sequence $p_1p_2\ldots p_d$ of $d$ affinely independent points, with its associated $(d-1)$-simplex, and its associated positive and negative side as above.
Two such oriented $(d-1)$-simplices are equivalent if their sequences can be obtained from each other by an even number of transpositions (e.g., $p_1 p_2 p_3$, $p_3 p_1 p_2$, and $p_2 p_ 3 p_1$ are equivalent).

Via oriented simplices, the concept of an order type generalizes to arbitrary dimensions.
Similar to the 2-dimensional case, the order type of a conowheel set $P = H \cup \{w\}$ in $\R^d$ determines the set of points on the positive side of the oriented $(d-1)$-simplex $w h_1 \dots h_{d-1}$, for each $(d-1)$-tuple in $H$.
We denote by $\lside{h_1 \dots h_{d-1}}$ the number of points on that positive side, and we denote by $\rside{h_1 \dots h_{d-1}}$ the number of points on the negative side.
We can thus define the frequency vector $F(P) = (F_0, F_1, \dots, F_{n-d+1})$ by letting $F_i$ denote the number of unordered tuples $\rho \in H^{d-1}$ with $|\lside{\rho} - \rside{\rho}| = i$.

Unfortunately, already for $d=3$ it turns out that this frequency vector does not always determine the number of $w$-embracing tetrahedra, i.e., the number of subsets of $H$ of size $d+1$ whose respective convex hulls contain~$w$.
An example is given in Appendix~\ref{sec:counterexamples2}.

\subsection{Origin-Embracing Sets}

Generalizing the approach for counting embracing $k$-sets from Section~\ref{sec:embrfreqvec} to higher dimension also fails already in 3-space.
Indeed, consider the set of non-embracing tetrahedra for a set $H \subseteq \R^3$.
Observe now that some of these tetrahedra have three edges that form a tangent plane through $w$, whereas others have four such edges.
In fact, the example given in Appendix~\ref{sec:counterexamples2} shows that we cannot hope for a formula with a similar structure as in equation (\ref{eq:embracing_plane}).

Instead, let $H \subseteq \R^3$ be of size $n$ and consider a partition $B \stackrel{.}{\cup} W = H$ defined by a plane~$\phi$ through $w = \mathbf{0}$ that is disjoint from~$H$.
Then, the set of non-embracing $k$-sets consists of those completely in $B$ and $W$, respectively, and those intersected by~$\phi$.
For the latter, consider the intersection of $\conv{A}$ of such a set $A = \{h_1,h_2,\dots\}$ with~$\phi$.
If we restrict our attention to the plane $\phi$, then there is again a unique tangent point $t$ at the intersection of an edge $h_1h_2$ with $\phi$ such that $\conv{A} \cap \phi$ is on the left side of the directed line $wt$.
Hence, with $\lside{h_1h_2}$ being the number of points ``left'' of the plane spanned by $w$, $h_1$, and $h_2$, we get that the number of embracing $k$-sets in 3-space is
\begin{equation}
\embr{k} = \binom{n}{k} - \binom{|B|}{k} - \binom{|W|}{k} - \sum_{h_1,h_2 \in B \times W} \binom{\lside{h_1h_2}}{k-2} \enspace .
\end{equation}

We can generalize this approach in the following way.
\begin{lemma}\label{lem:recurrence_d_space}
Let $H$ be a set of $n$ points in $\R^d$, with $H \cup \{\mathbf{0}\}$ in general position, and let $\psi$ be a generic 2-flat containing the origin.
Let $\mathrm{proj}: \R^d \rightarrow \R^{d-2}$ be a surjective projection that maps all of $\psi$ to $\mathbf{0} \in \R^{d-2}$.
Then, the number of embracing $k$-sets in $H$ is
\begin{eqnarray*}
& &\embr{k}(\mathrm{proj}(H)) - \frac{1}{2} \sum_{\substack{\rho \in \binom{H}{d-1}\\ \conv{\rho} \cap \psi \neq \emptyset}} \left(\binom{\lside{\rho}}{k-d+1} + \binom{\rside{\rho}}{k-d+1}\right) \enspace .
\end{eqnarray*}
\end{lemma}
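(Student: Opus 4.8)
The plan is to generalize the counting argument that produced equation~(\ref{eq:embracing_plane}) and its $3$-dimensional analogue, but now classifying non-embracing $k$-sets according to how they interact with the $2$-flat $\psi$. First I would fix the generic $2$-flat $\psi$ through the origin and the projection $\mathrm{proj} \colon \R^d \to \R^{d-2}$ collapsing $\psi$ to a point. A $k$-set $A \subseteq H$ fails to embrace $\mathbf{0}$ exactly when $\mathbf{0} \notin \conv{A}$, which by a separating-hyperplane argument means there is a hyperplane through $\mathbf{0}$ with all of $A$ strictly on one side. The key dichotomy is whether $\conv{A}$ meets $\psi$ or not. If $\conv{A} \cap \psi = \emptyset$, then $A$ projects down to a $k$-set whose projected convex hull avoids the origin of $\R^{d-2}$; conversely, if $\conv{A}$ avoids $\mathbf{0}$ and also avoids $\psi$, its projection avoids $\mathbf{0} \in \R^{d-2}$. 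So the non-embracing $k$-sets disjoint from $\psi$ are in bijection with the non-embracing $k$-sets of $\mathrm{proj}(H)$, accounting for the term $\embr{k}(\mathrm{proj}(H))$ in the claimed formula (note this term already subtracts the right count from $\binom{n}{k}$, so I should be careful to state the identity as "embracing count upstairs equals embracing count downstairs minus a correction").

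The second step handles the $k$-sets $A$ with $\conv{A} \cap \psi \neq \emptyset$ but $\mathbf{0} \notin \conv{A}$; these are precisely the non-embracing sets that got erroneously counted as embracing after projection, so they are the correction term. For such a set, I would intersect $\conv{A}$ with the $2$-flat $\psi$, obtaining a convex polygon in $\psi$ that does not contain $\mathbf{0}$. Mimicking the planar argument inside $\psi$, there is a unique edge of this polygonal cross-section that is "tangent" in the sense that $\conv{A} \cap \psi$ lies entirely on one side of the line through $\mathbf{0}$ and the tangent point; this tangent edge arises as the intersection of a facet-defining $(d-1)$-tuple $\rho$ of $A$ with $\psi$. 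This canonically assigns to $A$ a distinguished $(d-1)$-tuple $\rho \in \binom{H}{d-1}$ with $\conv{\rho} \cap \psi \neq \emptyset$, together with a choice of side. Given $\rho$ and a side, reconstructing $A$ amounts to choosing the remaining $k-(d-1)$ points of $A$ freely among the $\lside{\rho}$ points on that side (or $\rside{\rho}$ on the other), giving $\binom{\lside{\rho}}{k-d+1}$ or $\binom{\rside{\rho}}{k-d+1}$ completions. Summing over $\rho$ and both sides, then dividing by $2$ to correct for the double count coming from the two orientations of each $\rho$ (matching the factor $\tfrac12$ already seen in the statement), yields the claimed subtracted sum.

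The main obstacle I anticipate is making the tangency argument inside $\psi$ fully rigorous and establishing that the map $A \mapsto (\rho, \text{side})$ together with the choice of the remaining points is a clean bijection. Specifically, I must verify that genericity of $\psi$ guarantees the cross-section $\conv{A} \cap \psi$ is a nondegenerate polygon whose distinguished tangent edge comes from a unique $(d-1)$-subset of $A$, and that the two endpoints of this tangent edge correspond exactly to the points $h_1, h_2$ playing the role they did in the $3$-dimensional formula; the higher-dimensional version replaces the single edge $h_1 h_2$ by a full $(d-1)$-tuple $\rho$ whose convex hull meets $\psi$. I would also need to confirm that the division by $2$ correctly reconciles ordered-versus-unordered and left-versus-right conventions so that every non-embracing, $\psi$-crossing $k$-set is counted exactly once. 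Once this bijection is verified, assembling the pieces is routine: the total count of embracing $k$-sets equals $\binom{n}{k}$ minus all non-embracing ones, the $\psi$-disjoint non-embracing sets are absorbed into $\embr{k}(\mathrm{proj}(H))$, and the $\psi$-crossing ones give the correction sum, establishing the formula.
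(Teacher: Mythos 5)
Your overall strategy is the same as the paper's: split the non-embracing $k$-sets according to whether $\conv{A}$ meets $\psi$, absorb the $\psi$-avoiding ones into $\embr{k}(\mathrm{proj}(H))$, and count the remaining ones by a tangency argument inside the $2$-flat $\psi$. Two points in your second step need repair, and you have correctly flagged the second as the delicate one. First, a small geometric slip: the tangent object is a \emph{vertex} of the polygon $\conv{A}\cap\psi$, not an edge; generically it is the single point in which $\psi$ meets the $(d-2)$-simplex $\conv{\rho}$ spanned by a $(d-1)$-tuple $\rho\subseteq A$ (an edge of the cross-section would instead come from a $d$-tuple of $A$). Second, and more substantively, your accounting of the factor $\tfrac{1}{2}$ does not hold together as written. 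You assign to each bad set $A$ a \emph{unique} tangent pair $(\rho,\text{side})$ and claim the reconstruction is a clean bijection; if that were so, the number of bad sets would equal $\sum_{\rho}\bigl(\binom{\lside{\rho}}{k-d+1}+\binom{\rside{\rho}}{k-d+1}\bigr)$ with no factor $\tfrac{1}{2}$, and dividing by two ``for the two orientations of each $\rho$'' would then undercount by half --- the sum is already over unordered $\rho\in\binom{H}{d-1}$ with both side-terms written out explicitly, so there is no orientation double-count hiding in it. The correct reason for the $\tfrac{1}{2}$ is that each bad $A$ has exactly \emph{two} tangent tuples, a left tangent and a right tangent of the polygon $\conv{A}\cap\psi$ as seen from $\origin$, and each of these two distinct tuples contributes exactly one binomial term to the sum; hence every bad $A$ is counted twice, once per tangent. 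With that correction your argument coincides with the paper's proof.
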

\begin{proof}
Clearly, any embracing $k$-set is also an embracing $k$-set in the projection, so we only have to subtract the number of non-embracing $k$-sets which happen to be embracing in the projection.
Let $A$ be such a set.
Since $\origin \in \mathrm{proj}(\conv{A})$, we have $\conv{A} \cap \psi \neq \emptyset$.
In the 2-dimensional subspace defined by $\psi$, there is a unique point $t$ on the boundary of $\conv{A} \cap \psi$ such that $\conv{A} \cap \psi$ is in the left closed halfplane defined by $\origin t$. 
Since $\psi$ is generic, $t$ is the intersection of $\psi$ with a $(d-2)$-simplex defined by a tuple $\rho$ of $d-1$ points of~$A$, and the oriented $(d-1)$-simplex $\origin \rho$ has all points of $A \setminus \rho$ either on its positive or negative side.
In the sum, we are thus counting each such non-embracing $k$-set twice (for the left and the right tangent), and the lemma follows.
\end{proof}

With the previous lemma at hand, it is now a simple task to give a proof of our main computational result.

\thmsimpldepth
\begin{proof}
The proof of Lemma~\ref{lem:recurrence_d_space} is constructive apart from the choice of the plane~$\psi$, which can be done arbitrarily using the techniques in~\cite{sos}.
Whether a $(d-2)$-simplex $\conv{\rho}$ intersects $\psi$ can be decided in polynomial time in $d$ (e.g., by testing if the set $\mathrm{proj}(\rho)$ in $\R^{d-2}$ contains the origin $\origin = \mathrm{proj}(\psi)$ in its convex hull).
In order to compute the values $\lside{\rho}$ (and similarly $\rside{\rho}$) for the $(d-1)$-tuples~$\rho$,
we first consider $H$ as a set of $n$ points in the $(d-1)$-dimensional projective plane.
We then compute the dual hyperplane arrangement in $O(n^{d-1})$ time~\cite{arrangements}, which allows the extraction of the values~$\lside{\rho}$ within the same time bounds, as also discussed in~\cite{arrangements}.

After $O(n^{d-1})$ time, we can thus produce a vector that indicates, for each $i$, the number of unordered $(d-1)$-tuples $\rho$ whose convex hull intersects $\psi$ and for which $\lside{\rho} = i$ holds (and hence $\rside{\rho} = n-d+1-i$).
For evaluating the sum in Lemma~\ref{lem:recurrence_d_space}, if we use this vector we have to add up only $O(n)$ binomial coefficients.

Finally, the term $\embr{k}(\mathrm{proj}(H))$ can be evaluated recursively.
Since the dimension decreases by two in each step, the number of recursive calls is linear in the parameter $d$.
\end{proof}

\subsection{Polytopes with Few Vertices}
\label{sec:poly_few_vertices}
Through the so-called Gale transform~\cite{WagnerW01,Welzl01,Ziegler95}, origin-embracing triangles are in correspondence to facets ($(d-1)$-faces) of simplicial $d$-polytopes with at most $n = d+3$ vertices.
More generally, subsets of size $k$ in the so-called Gale dual that contain the origin in their convex hull correspond to $(n-k-1)$-faces of the polytope.
Therefore, some of our results (number of frequency vectors, number of order types, computation of the number of embracing triangles, etc.) have a connection to such simplicial $d$-dimensional polytopes with at most $d+3$ vertices, and thus to known results in that context.

\subparagraph*{Gale duality.}
For $n > d$, we call a matrix $A \in \R^{n\times d}$ \emph{legal} if $A$ has full rank $d$ and $A^\top \mathbf{1}_n = \mathbf{0}_d$.
Let $S_A = (p_1,p_2\commadots p_n)$ be the sequence of points in $\R^d$ with the coordinates of $p_i$ obtained from the $i$-th row of $A$.
Legal thus means that $S_A$ is not contained in a hyperplane and that the origin is the centroid of $S_A$. 
For legal matrices $A \in \R^{n\times d}$ and $B \in \R^{n\times n-d-1}$, we call $B$ an \emph{orthogonal dual} of $A$, in symbols $A \bot B$, if $A^\top B = \mathbf{0}$.
$S_B$ is then called a \emph{Gale dual (Gale transform, Gale diagram) of }$S_A$.%
\footnote{Following~\cite{Welzl01}, we add the requirement that the origin is the centroid, in contrast to
, e.g.,~\cite[Chapter~5.6]{lectures_discrete_geometry}.}
In other words, if $A \bot B$ then all column vectors of $B$ are orthogonal to all column vectors of $A$; together with the legal condition, this means that the column vectors of $B$ span the space of all vectors orthogonal to the columns of $A$ and to $\mathbf{1}_n$, i.e., they form a basis of the null space of $(A,\mathbf{1}_n)^\top$, where $(A,\mathbf{1}_n)$ is the matrix $A$ with an extra column of $1$'s.

The following proposition allows to make a connection between the facets of the polytope defined by the points $S_A$ and its Gale dual $S_B$.

\begin{prop}[{\cite[p.~111]{lectures_discrete_geometry}}]\label{prop:embracing_dual}
Let $A \bot B$ with $S_A = (p_1,p_2\commadots p_n)$ and $S_B = (p_1^*, p_2^*\commadots p_n^*)$. 
For every $I \subseteq [n]$, the set $\{p_i \,|\, i \in I\}$ is contained in a facet of the polytope $\conv{S_A}$ if and only if the set $\{p_i^* \,|\, i \not\in I\}$ is embracing.
\end{prop}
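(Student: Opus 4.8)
The plan is to translate the geometric condition ``$\{p_i^* \mid i \notin I\}$ is embracing'' into a statement about nonnegative affine dependencies, and then to read off from such a dependency a supporting hyperplane of $\conv{S_A}$. Writing $M = (A,\mathbf{1}_n) \in \R^{n \times (d+1)}$, the excerpt already records that the columns of $B$ form a basis of $\ker M^\top$; hence $(\operatorname{col} B)^\perp = (\ker M^\top)^\perp = \operatorname{col} M$, so a vector $\mu \in \R^n$ satisfies $B^\top \mu = \mathbf{0}$ if and only if $\mu = Ac + t\mathbf{1}_n$ for some $c \in \R^d$ and $t \in \R$. Since row $i$ of $A$ is $p_i$ and row $i$ of $B$ is $p_i^*$, the $i$-th coordinate of such a $\mu$ is exactly $\mu_i = \langle p_i, c\rangle + t$, while $B^\top \mu = \sum_i \mu_i p_i^*$.

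First I would rewrite ``embracing'': the origin lies in $\conv\{p_j^* \mid j \notin I\}$ if and only if there is a vector $\mu \geq \mathbf{0}$ with $\mu \neq \mathbf{0}$, with $\mu_i = 0$ for all $i \in I$, and with $\sum_j \mu_j p_j^* = B^\top \mu = \mathbf{0}$ (normalizing $\sum_j \mu_j = 1$ afterwards). Combining this with the previous paragraph, embracing is equivalent to the existence of $(c,t) \neq (\mathbf{0},0)$ such that $\langle p_i, c\rangle + t = 0$ for all $i \in I$, $\langle p_j, c\rangle + t \geq 0$ for all $j \notin I$, and the inequality is strict for at least one such $j$. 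This is precisely the statement that $h = \{x : \langle x, c\rangle + t = 0\}$ is a supporting hyperplane whose closed halfspace contains all of $S_A$, that $h$ passes through every $p_i$ with $i \in I$, and that $h$ leaves at least one vertex strictly off it; in other words, $\{p_i \mid i \in I\}$ is contained in a proper face $\conv{S_A} \cap h$ of the polytope $\conv{S_A}$, which is full-dimensional because $A$ has rank $d$ and the origin is the centroid of $S_A$.

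It then remains to bridge ``contained in a proper face'' with ``contained in a facet.'' One direction is immediate, since a facet is a proper face; for the other I would invoke the standard polytope fact that every proper face is contained in some facet, so that a proper face containing $\{p_i \mid i \in I\}$ yields a facet containing it. The two directions of the equivalence then follow by running the above translation forward and backward, and the part I expect to need the most care is the edge-case bookkeeping. In particular I must check that $c \neq \mathbf{0}$, so that $h$ is a genuine hyperplane; this can fail only when $I = \emptyset$, and both $I = \emptyset$ and $I = [n]$ should be dispatched separately (when $I = \emptyset$ both sides hold because the origin is the centroid of $S_B$ by legality, $B^\top \mathbf{1}_n = \mathbf{0}$, while when $I = [n]$ both sides fail). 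The genuinely delicate point is that the condition $\mu \neq \mathbf{0}$ must be matched on both sides: on the polytope side it is exactly the requirement that $h$ be a proper supporting hyperplane with some vertex strictly on the positive side, and on the Gale side it is what separates the origin truly lying in the convex hull from a degenerate empty combination.
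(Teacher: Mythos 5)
The paper gives no proof of Proposition~\ref{prop:embracing_dual}: it is quoted from \cite{lectures_discrete_geometry} and used as a black box, so there is nothing internal to compare against. Your argument is a correct, self-contained derivation and is essentially the standard proof of the Gale-duality face dictionary. The key mechanism is right: since the columns of $B$ form a basis of the null space of $(A,\vecone_n)^\top$, the vectors $\mu$ with $B^\top\mu=\veczero$ are exactly those of the form $\mu_i=\langle p_i,c\rangle+t$, i.e.\ affine functions evaluated on $S_A$, and the nonnegativity/support conditions coming from ``$\origin\in\conv{\{p_j^*\,|\,j\notin I\}}$'' translate precisely into a supporting hyperplane through $\{p_i\,|\,i\in I\}$ with some vertex strictly off it. You also correctly identify and dispatch the only delicate points: the passage from ``contained in a proper face'' to ``contained in a facet'' (every proper face lies in a facet), and the degenerate cases $I=\emptyset$ and $I=[n]$, which is where $c=\veczero$ can occur. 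The one step worth writing out in full is the full-dimensionality of $\conv{S_A}$ (needed so that no supporting hyperplane contains all of $S_A$, which is what makes $\mu\neq\veczero$ match up on both sides); as you indicate, legality of $A$ --- rank $d$ together with the centroid at the origin --- forces the affine hull of $S_A$ to be all of $\R^d$, so this is fine.
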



Moreover, given $S_A$ we can show that computing a Gale dual $S_B$ can be done essentially as fast as matrix multiplication.

\begin{prop}\label{prop:computedual}
Given a legal matrix $A \in \R^{n \times d}$, an orthogonal dual can be computed in time $O(n^\omega)$, where $\omega$ is the exponent for matrix multiplication over~$\R$.
\end{prop}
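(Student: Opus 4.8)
The plan is to recognize the task as computing a basis of a null space and then to give an explicit block formula, so that the running time is dominated by a single matrix inversion and a single rectangular matrix product.

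First I would set $C := (A \mid \mathbf{1}_n) \in \R^{n \times (d+1)}$, the matrix $A$ with the all-ones vector appended as a final column, so that $C^\top = \begin{pmatrix} A^\top \\ \mathbf{1}_n^\top \end{pmatrix}$. By definition, a legal matrix $B$ is an orthogonal dual of $A$ exactly when $A^\top B = \mathbf{0}$ and $B^\top \mathbf{1}_n = \mathbf{0}$, that is, precisely when $C^\top B = \mathbf{0}$ and $B$ has full column rank $n-d-1$; in other words, the columns of $B$ must form a basis of the null space of $C^\top$. I would first verify that $\operatorname{rank}(C) = d+1$. Since $A$ is legal we have $A^\top \mathbf{1}_n = \mathbf{0}$, so if $\mathbf{1}_n = Ac$ for some $c$ then $n = \mathbf{1}_n^\top \mathbf{1}_n = c^\top A^\top \mathbf{1}_n = 0$, a contradiction; hence $\mathbf{1}_n$ is independent of the $d$ (independent) columns of $A$. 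Thus the null space of $C^\top$ has dimension $n-(d+1) = n-d-1$, matching the required shape of $B$.

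Next I would produce such a $B$ explicitly. Since $C$ has rank $d+1$, after permuting rows by some permutation matrix $\Pi$ we may write $\Pi C = \begin{pmatrix} C_1 \\ C_2 \end{pmatrix}$ with $C_1 \in \R^{(d+1)\times(d+1)}$ invertible and $C_2 \in \R^{(n-d-1)\times(d+1)}$. Putting
\[
  \tilde B := \begin{pmatrix} -(C_1^\top)^{-1} C_2^\top \\ I_{n-d-1} \end{pmatrix}, \qquad B := \Pi^\top \tilde B \enspace,
\]
one checks directly that $(\Pi C)^\top \tilde B = C_1^\top \bigl(-(C_1^\top)^{-1} C_2^\top\bigr) + C_2^\top I_{n-d-1} = \mathbf{0}$, and hence $C^\top B = (\Pi C)^\top \tilde B = \mathbf{0}$. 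The identity block guarantees that $\tilde B$, and therefore $B$, has full column rank $n-d-1$. Consequently $A^\top B = \mathbf{0}$ and $B^\top \mathbf{1}_n = \mathbf{0}$, so $B$ is legal and $A \bot B$.

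Finally, for the complexity, identifying $d+1$ independent rows (equivalently, a suitable $\Pi$ together with the invertible block $C_1$) amounts to a rank-revealing elimination on the $(d+1)\times n$ matrix $C^\top$, which can be carried out in $O(n^\omega)$ time via fast LUP-type factorization. Inverting the $(d+1)\times(d+1)$ matrix $C_1$ costs $O((d+1)^\omega) = O(n^\omega)$. Forming the product $(C_1^\top)^{-1} C_2^\top$ multiplies a $(d+1)\times(d+1)$ matrix by a $(d+1)\times(n-d-1)$ matrix; splitting the wide factor into $O(n/(d+1))$ square blocks gives cost $O\bigl(n (d+1)^{\omega-1}\bigr) = O(n^\omega)$, using $d+1 \le n$. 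Assembling $\tilde B$ and applying $\Pi^\top$ are negligible, so the total is $O(n^\omega)$. The construction is entirely elementary linear algebra; the only point requiring care—and the step I expect to be the main obstacle—is keeping every sub-step, in particular the row selection, within the matrix-multiplication bound, which is why I would invoke the standard fast factorization results rather than naive Gaussian elimination.
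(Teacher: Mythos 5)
Your proposal is correct and follows essentially the same route as the paper: both reduce the task to finding a basis of the null space of $(A,\mathbf{1}_n)^\top$, select $d+1$ independent rows via a fast LUP-type factorization, and write the dual in block form with an identity block and a term of the form (inverse of a $(d+1)\times(d+1)$ block) times a rectangular block, all within $O(n^\omega)$. Indeed, with $\Pi$ taken to be the permutation from the paper's factorization $(A,\mathbf{1}_n)^\top = LUP$, your $C_1$ equals $U_1^\top L^\top$ and your formula $-(C_1^\top)^{-1}C_2^\top$ simplifies to the paper's $-U_1^{-1}U_2$, so the two constructions even produce the same matrix.
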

\begin{proof}
Note that $(A,\vecone_n) \in \R^{n \times d+1}$ also has full rank $d+1$ because the extra column $\vecone_n$ is orthogonal to all columns in $A$.
Recall that, therefore, there exists a factorization $(A,\vecone_n)^\top = LUP$ where $L \in \R^{d+1 \times d+1}$ is lower triangular, $U = (U_1,U_2) \in \R^{d+1 \times d+1} \times \R^{d+1 \times n-d-1}$ is upper triangular (in particular, this means that all entries in the diagonals of $L$ and $U_1$ are non-zero and that these matrices are invertible), and $P \in \R^{n \times n}$ is a permutation matrix \cite[Theorem 16.4]{burgisser}.
Also recall that this factorization can be computed in time $O(n^{\omega})$ \cite[Theorem 16.5]{burgisser}.

Given the factorization it is now easy to compute an orthogonal dual $B \in \R^{n \times n-d-1}$. Indeed, write the rows in $B$ such that
\begin{equation*}
  PB = \left(\begin{array}{c}B_1 \\ I\end{array}\right)\enspace ,
\end{equation*}
with unknown $B_1 \in \R^{d+1 \times n-d-1}$ and identity matrix $I \in \R^{n-d-1 \times n-d-1}$. Then, the equality
\begin{equation*}
  \veczero = (A,\vecone_n)^\top B = LUPB
\end{equation*}
is satisfied if and only if
\begin{equation*}
  \veczero = UPB = U_1B_1 + U_2 \Leftrightarrow B_1 = -U_1^{-1}U_2 \enspace .
\end{equation*}
Note that $B$ must have full rank since the columns are clearly linearly independent.
Moreover, as a last step, in order to obtain $B_1$ we first need the inverse $U_1^{-1}$, which can be computed in time $O(d^\omega)$ \cite[Proposition 16.6]{burgisser}.
%
\end{proof}

Another crucial property of the Gale dual is that the points in $S_A$ are in general position if and only if the points in $S_B$ are linearly independent~\cite[p.~111]{lectures_discrete_geometry}.

\subparagraph*{F-vector.}
For a $d$-dimensional polytope $\mathcal{P}$, the \emph{$f$-vector of $\mathcal{P}$} is defined as the sequence $f(\mathcal{P}) = (f_{-1},f_0\commadots f_{d-1})$, where $f_i(\mathcal{P})$ is the number of $i$-dimensional faces ($i$-faces) of $\mathcal{P}$ (the empty face is the unique $(-1)$-face, $0$-faces are vertices, 1-faces are edges, \dots, $(d-1)$-faces are facets).
Thus, if $S = \{p_1,p_2\commadots p_n\}$ is a set of $n$ points in general position, $\mathcal{P} = \conv{S}$ is the corresponding simplicial polytope, and $Q$ is the corresponding Gale dual $\{p_1^*, p_2^*\commadots p_n^*\}$, then $f_i(\mathcal{P})$ is equal to the number of embracing ${(n-i-1)}$-sets in $Q$.
Note that linear independence of the points in $Q$ alone does not assure general position, but slightly perturbing each point on a straight line through the origin does.
Computing the $f$-vector can thus be done by first computing the Gale dual (Proposition~\ref{prop:computedual}), by making use of the correspondence with embracing sets (Proposition~\ref{prop:embracing_dual}), and then by counting these embracing sets (Theorem~\ref{thm:simpldepth}).

\thmcalcfacet

Note that, in spite of the obtained running time, the asymptotic number of facets may be as large as $n^{k}$.
Moreover, a generalization of Theorem~\ref{thm:calc_facet} to sets not necessarily in general position is possible for $k=3$, as explained in Appendix~\ref{sec:multisets}.

\subparagraph{Polytopes with at most $\mathbf{d+3}$ vertices.}
We finally draw the connection between wheel sets of size $n+1$ and the combinatorial structure of simplicial $d$-polytopes with at most $n = d+3$ vertices.

Let $\mathcal{P}$ be the polytope defined as the convex hull of a set $S$ of $n=d+3$ points in general position in $\R^d$.
In other words, let $\mathcal{P}$ be any simplicial $d$-polytope with at most $d+3$ vertices.
Furthermore, let $H$ be the Gale dual of $S$.
Note that $H$ lives in $n-d-1=2$ dimensions, i.e., in the plane.
After possibly rescaling the points in $H$ and adding an extra point $w=\mathbf{0}$ at the origin, we can associate a wheel set $P = H \cup \{w\}$ with the polytope $\mathcal{P}$.
The order type of that wheel set $P$ is uniquely determined by the combinatorial type of the polytope $\mathcal{P}$; indeed, the combinatorial type of $\mathcal{P}$ determines by Propostion~\ref{prop:embracing_dual} the family of $w$-embracing triangles in $H$, which in turn determines by Lemma~\ref{lem:trgs_det_ordtype} the order type of $P$.
In the other direction, each order type of a wheel set gives rise to a unique combinatorial type of a $d$-polytope, since it clearly determines its own family of embracing triangles.
After subtracting one (for the set in convex position, which does not correspond to an actual polytope), it is therefore no coincidence that the number in Theorem~\ref{thm:ordtype} is the same as the one obtained by Perles for the number of simplicial $d$-polytopes with at most $d+3$ vertices \cite[Chapter~6.3]{gruenbaum_book}.

Similarly, given the $f$-vector $f(\mathcal{P})$ of such a polytope, we see by Proposition~\ref{prop:embracing_dual} and Lemma~\ref{lem:embracing_implies_frequency} that it corresponds uniquely to the frequency vector $F(P)$ of a wheel set.
Therefore, the number of $f$-vectors of simplicial $d$-polytopes with at most $d+3$ vertices, as obtained by Linusson~\cite{Linusson99}, equals the number of frequency vectors of wheel sets, as given by Theorem~\ref{thm:freqvec} (again, after subtracting one).
Via the Gale dual, we thus obtain a direct proof for the number of $f$-vectors, as desired by Linusson.
Doing the same for $d+4$ vertices remains an open problem, however.


\bibliographystyle{abbrv}
\bibliography{bibliography}


\appendix

%
%
%
%
%
%
%

\section{Example: Convex Partitions}\label{sec:convex_partitions}
A crossing-free \emph{convex partition} of $P$ is a partition of $P$ such that the convex hulls of the individual parts are pairwise disjoint.
These objects have a natural representation as crossing-free geometric graphs on $P$, simply by taking all edges that lie on the boundary of the convex hull of each part.
Even though there is no obvious choice for $\mathcal{G}$ such that $\nb{\mathcal{G}}{P}$ is equal to the number of crossing-free convex partitions on a conowheel set $P$, it is possible to apply the machinery developed in Section~\ref{sec:ggraph}.
All that is required is a specialized version of Lemma~\ref{lem:mutation}, which shows that $\delta_{i,j}$ is well-defined.
Alternatively, Theorem~\ref{thm:ggraph} could also be generalized to a setting where $\mathcal{G}$ is a family of hypergraphs, but we will not explore that further.

We will make use of the fact that the number of crossing-free convex partitions on $\Pconv$ (a set of $n+1$ points in convex position) is $C_{n+1}$, the $(n+1)$-th Catalan number~\cite{Becker48}.

Let now $\delta_{i,j}$ be the increment in the number of crossing-free convex partitions when going from $P$ to $P'$ as in Figure~\ref{fig:mutation}.
Note that any partitions where $h_1$ and $h_2$ belong to different parts are not affected by the mutation.
The same holds for partitions where $h_1$, $h_2$ and $w$ all belong to the same part.
Hence, when counting partitions on $P$ we may restrict our attention to those cases where $h_1$ and $h_2$ belong to the same part, but that part does not contain $w$.
In any such case, the part that contains $h_1$ and $h_2$ cannot contain any of the $i$ points of $H$ on the left. 
Therefore, the points on the left (without $h_1$ and $h_2$ but including $w$) form a set of $i+1$ points in convex position, giving $C_{i+1}$ possibilities to build a crossing-free convex partition.
The points on the right hand side (with $h_1$ and $h_2$ contracted to a single point) form a set of $j+1$ points in convex position, giving $C_{j+1}$ possibilities.
The set $P'$ can be handled in the same way, and hence we obtain $\delta_{i,j} = C_{i+1}C_{j+1} - C_{i+1}C_{j+1} = 0$.

We conclude that the number of crossing-free convex partitions on any conowheel set of size $n+1$ is $C_{n+1}$.

\section{Frequency Vectors do not determine Triangulations}\label{sec:counterexamples1}
As demonstrated in Section~\ref{sec:crossing_number}, it can sometimes be useful to assign a frequency vector to every point in an arbitrary planar point set $\tilde{P}$.
However, while this set of frequency vectors determines for example the crossing number of $\tilde{P}$, we show here that it does not determine the number of certain crossing-free geometric graphs.

If we allow more than one non-extreme point, there are plenty of examples where the sets of frequency vectors are the same but the numbers of triangulations, say, are different, one of which can be seen in \fig{fig:same_freq_vec}.
Intuitively, in that example one of the non-extreme points $w$ moves over a halving segment $h_1h_2$, which maintains the frequency vector of $w$, and without encountering any other collinearities, which maintains the frequency vectors of all the other points.
At the same time, however, moving $w$ changes the number of triangulations.
For the illustrated example, the latter can be seen easily by noting that we only have to consider triangulations that use the edge $\{h_1,h_2\}$ if we want to compute the difference, similar to what we did in the proof of Lemma~\ref{lem:mutation}.
Under this contraint, $\tilde{P}$ has $3 \cdot 1 = 3$ triangulations (we are simply multiplying the respective numbers of triangulations to the left and right of the segment $h_1h_2$), whereas $\tilde{P}'$ has $2 \cdot 2 = 4$ triangulations.
We conclude that $\tilde{P}$ has one fewer triangulation than $\tilde{P}'$.

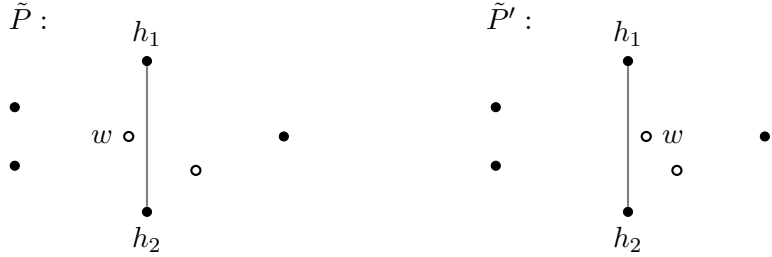
\begin{figure}
\centering
     \begin{tikzpicture}[xscale=1.2]
       \begin{scope}[xshift=-75]
        \node[Hpt,label=above:$h_1$] (p) at (90:1) {};
         \node[Hpt,label=below:$h_2$] (q) at (270:1) {};
         \node[Wpt,label=left:$w$] (w) at (-0.2,0) {};
         \node[Wpt] (w2) at (320:0.7) {};
         \node[Hpt] at (165:1.5) (l1) {};
         \node[Hpt] at (195:1.5) (l3) {};
         \node[Hpt] at (0:1.5) (r1) {};
         \draw[sptline3] (p) -- (q);
         
         \node at (-1.3,1.6) {$\tilde{P}:$};
       \end{scope}
       
       \begin{scope}[xshift=75]
         \node[Hpt,label=above:$h_1$] (p) at (90:1) {};
         \node[Hpt,label=below:$h_2$] (q) at (270:1) {};
         \node[Wpt,label=right:$w$] (w) at (+0.2,0) {};
         \node[Wpt] (w2) at (320:0.7) {};
         \node[Hpt] at (165:1.5) (l1) {};
         \node[Hpt] at (195:1.5) (l3) {};
         \node[Hpt] at (0:1.5) (r1) {};
         \draw[sptline3] (p) -- (q);
         
         \node at (-1.3,1.6) {$\tilde{P}':$};
       \end{scope}
     \end{tikzpicture}
\caption{Two point sets $\tilde{P}$ and $\tilde{P}'$ (each with two non-extreme points) that have the same set of frequency vectors, but a different number of triangulations.}
\label{fig:same_freq_vec}
\end{figure}

\section{Frequency Vector does not determine Embracing Tetrahedra}
\label{sec:counterexamples2}
As discussed in Section~\ref{sec:higher_dimensions}, the concepts of conowheel sets and frequency vectors can be generalized to higher dimensions.
However, already in $\R^3$ there are conowheel sets with the same such 3-dimensional frequency vector but with a different number of $w$-embracing tetrahedra.
In what follows, we explain the example illustrated in \fig{fig:same_freq_vec_diff_tetrahedra}.

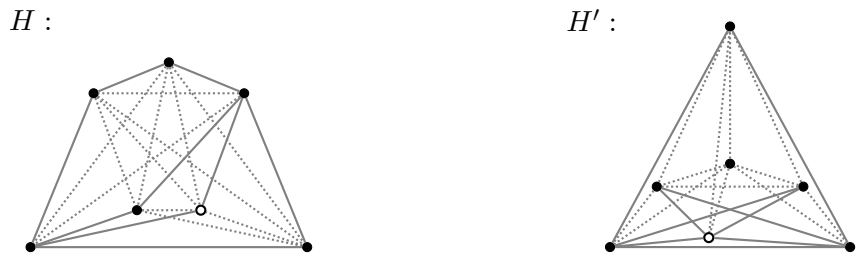
\begin{figure}[b]
\centering
\begin{tikzpicture}[scale=1.4]
       \begin{scope}[xshift=-75]
         \node[Hpt] (b1) at (45:1) {};
         \node[Hpt] (b2) at (90:1) {};
         \node[Hpt] (b3) at (135:1) {};
         \node[Hpt] (b4) at (210:1.5) {};
         \node[Hpt] (b5) at (330:1.5) {};
         \node[Hpt] (b6) at (-0.3,-0.4) {};
         \node[Wpt] (w) at (0.3,-0.4) {};

         \draw[sptline3] (b1) -- (b2);
         \draw[sptline3] (b2) -- (b3);
         \draw[sptline3] (b3) -- (b4);
         \draw[sptline3] (b4) -- (b5);
         \draw[sptline3] (b5) -- (b1);
         \draw[sptline3] (b1) -- (b6);
         \draw[sptline3] (b6) -- (b4);
         \draw[sptline3] (b4) -- (w);
         \draw[sptline3] (w) -- (b1);

         \draw[sptline4] (b1) -- (b3);
         \draw[sptline4] (b1) -- (b4);
         \draw[sptline4] (b2) -- (b4);
         \draw[sptline4] (b2) -- (b5);
         \draw[sptline4] (b2) -- (b6);
         \draw[sptline4] (b2) -- (w);
         \draw[sptline4] (b3) -- (b5);
         \draw[sptline4] (b3) -- (b6);
         \draw[sptline4] (b3) -- (w);
         \draw[sptline4] (b5) -- (b6);
         \draw[sptline4] (b5) -- (w);
         \draw[sptline4] (b6) -- (w);
         
         \node at (-1.3,1.4) {$H:$};
       \end{scope}
       
       \begin{scope}[xshift=75]
         \begin{scope}[yshift=-33]
         \node[Hpt] (b1) at (20:1.2) {};
         \node[Hpt] (b2) at (55:1.2) {};
         \node[Hpt] (b3) at (90:1.2) {};
         \node[Hpt] (b4) at (125:1.2) {};
         \node[Hpt] (b5) at (160:1.2) {};
         \node[Hpt] (b6) at (90:2.5) {};
         \node[Wpt] (w) at (-0.2,0.5) {};
         \end{scope}

         \draw[sptline3] (b1) -- (b6);
         \draw[sptline3] (b6) -- (b5);
         \draw[sptline3] (b5) -- (b1);
         \draw[sptline3] (b1) -- (b4);
         \draw[sptline3] (b4) -- (w);
         \draw[sptline3] (w) -- (b1);
         \draw[sptline3] (b2) -- (b5);
         \draw[sptline3] (b5) -- (w);
         \draw[sptline3] (w) -- (b2);

         \draw[sptline4] (b1) -- (b2);
         \draw[sptline4] (b1) -- (b3);
         \draw[sptline4] (b2) -- (b3);
         \draw[sptline4] (b2) -- (b4);
         \draw[sptline4] (b2) -- (b6);
         \draw[sptline4] (b3) -- (b4);
         \draw[sptline4] (b3) -- (b5);
         \draw[sptline4] (b3) -- (b6);
         \draw[sptline4] (b3) -- (w);
         \draw[sptline4] (b4) -- (b5);
         \draw[sptline4] (b4) -- (b6);
         \draw[sptline4] (b6) -- (w);
         
         \node at (-1.3,1.4) {$H':$};
       \end{scope}
\end{tikzpicture}
\caption{Point sets $H$ and $H'$ in $\R^3$, projected onto the drawing plane $z=1$ by lines through the origin.
Points with negative $z$-coordinate are white and the others are black.
Combined with an extra point $w=\mathbf{0}$ at the origin (which is not depicted in the figure), this defines two conowheel sets.
Both resulting sets have the same 3-dimensional frequency vector, but $H$ has six origin-embracing tetrahedra, whereas $H'$ has only four.
}
\label{fig:same_freq_vec_diff_tetrahedra}
\end{figure}

In that figure we use a representation discussed by Stolfi~\cite{stolfi} in the context of ``oriented projective geometry'':
A point $p = (x, y, z)$ in $\R^3$ is projected to the point $p' = (x/z, y/z)$ in $\R^2$.
In more intuitive terms, we project $p$ onto the drawing plane $z=1$ along a line that goes through the origin.
In the figure, we further distinguish points with negative $z$-coordinate (white) and positive $z$-coordinate (black).

The figure depicts two distinct point sets $H$ and $H'$, which, when combined with an extra point $w=\mathbf{0}$ at the origin, yield two conowheel sets $P$ and $P'$ in $\R^3$.
The employed projection allows us to identify origin-embracing tetrahedra; either a segment between two white points that crosses a segment between two black points, or a black point inside a triangle spanned by three white points, or a white point inside a triangle spanned by three black points.
Given that there is only one white point, only the last case is relevant; and a careful counting shows that $H$ has six origin-embracing tetrahedra, whereas $H'$ has only four.

Furthermore, the projection allows for counting points on one side of the plane spanned by two points $h_1$, $h_2$ of $H$, say, and the origin $w$; we simply take the number of black points on one side of the line through $h_1$ and $h_2$ in the projection, and then add the number of white points on the other side.
In the figure we thus distinguish pairs of points where this sum is one (solid lines) and where it is two (dotted lines).
Consequently, a careful counting again shows that the two sets have the same 3-dimensional frequency vector $F(P) = F(P') = (0,12,0,9,0,0)$.

\section{Minimal Embracing Multisets}\label{sec:multisets}
We give a short account on how our approach for Theorem~\ref{thm:calc_facet} in Section~\ref{sec:poly_few_vertices} can be modified for point sets not in general position in the case $k=3$.
That is, we need to understand how to count minimal embracing sets in $\R^2$, given that there might be multiplicities in the point set.

\newcommand{\embrmin}{\mathsf{embr}_{\mathsf{min}}}
Let $H$ be a multiset of $n$ points in the plane on the unit circle centered at the origin, which again takes the role of the extra point $w$.
We again let $\lside{h}$ and $\rside{h}$ be the numbers of points to the left and right, respectively, of the line $wh$.
In addition, we let $\multiplicity{h}$ denote the multiplicity of $h$ in $H$, and we let $\opposite{h}$ denote the multiplicity of the point $-h$ in $H$, i.e., the number of points on the line $wh$ but on the ``opposite'' side of $h$.
For any $h \in H$ we thus have $n = \lside{h} + \rside{h} + \multiplicity{h} + \opposite{h}$.
Finally, we denote by $\tilde H$ the underlying set of points contained in $H$, i.e., $\tilde H$ is $H$ with all multiplicites removed.

A subset $A \subseteq H$ of points is called \emph{minimal $w$-embracing} if it is $w$-embracing, but no proper subset of~$A$ is $w$-embracing.
Note that any such $A$ contains either three distinct points with $w$ in the interior of $\conv{A}$, or two distinct points with $w$ in the relative interior of the connecting segment.
The case where $A$ has size four or larger cannot occur since it is not minimal, and the case where $A$ has size one cannot occur because $w$ is not contained in $H$.

The minimal $w$-embracing subsets of size two are easy to count: 
\begin{equation}
\label{eq:MinEmbrSeg}
\frac{1}{2}   \sum_{h \in \tilde{H}} \multiplicity{h} \opposite{h} = \frac{1}{2} \sum_{h \in H} \opposite{h} \enspace . 
\end{equation}

To understand minimal $w$-embracing subsets of size three, we first compute the number $\Delta(H)$ of \emph{proper} triangles in $H$, i.e., subsets of three points that do not lie on a common line:\footnote{We use that for $n = a_1 + a_2 + \dots + a_m$, we have 
$\sum_{\{i,j,k\} \in {\binom{[m]}{3}}} a_i a_j a_k = \frac{1}{6} \sum_{i=1}^m a_i (n-a_i) (n - 2a_i)$.}
\begin{equation}\label{eq:proper_triangles}
\begin{aligned}
\Delta(H) = \sum_{\{h,p,q\} \in {\binom{\tilde{H}}{3}}} \multiplicity{h} \multiplicity{p} \multiplicity{q} &= \frac{1}{6} \sum_{h \in \tilde{H}} \multiplicity{h}(n -\multiplicity{h})(n-2\multiplicity{h})\\
&= \frac{1}{6}  \sum_{h \in H}   (n -\multiplicity{h})(n-2\multiplicity{h}) \enspace.
\end{aligned}
\end{equation}

Now call a pair  $(h,\{p,q\})$ of three distinct points in $H$ \emph{angle-embracing} if $w$ lies in the interior of the infinite convex cone that has apex $h$ and that is spanned by vectors $p-h$ and $q-h$.
Note that if $A = \{h,p,q\}$ forms a proper triangle, then this gives rise to three angle-embracing pairs if and only if $A$ is minimal $w$-embracing, and it gives rise to exactly one angle-embracing pair, otherwise.
If $x$ denotes the number of minimal $w$-embracing triangles, and $y$ the number of proper triangles that are not minimal $w$-embracing (which includes, in particular, all proper triangles that are not $w$-embracing at all), we therefore have
\begin{align*}
x + y &= \Delta(H)\\
3x + y &= \sum_{h \in \tilde{H}} \multiplicity{h} \lside{h} \rside{h} \enspace .
\end{align*}
We get the following for $x$, the number of minimal $w$-embracing sets of size three:
\begin{equation}
\label{eq:MinEmbrTrg}
\frac{1}{2} \sum_{h \in \tilde{H}} \multiplicity{h} \lside{h} \rside{h} - \frac{1}{2} \Delta(H) = \frac{1}{2} \sum_{h \in {H}} \lside{h} \rside{h} - \frac{1}{2} \Delta(H) \enspace .
\end{equation}
Summing up equations (\ref{eq:MinEmbrSeg}) and (\ref{eq:MinEmbrTrg}), and combining them with equation (\ref{eq:proper_triangles}), finally gives the following formula for the number of minimal $w$-embracing subsets in $H$:
\begin{equation}
\frac{1}{2} \sum_{h \in H}  \left(\lside{h} \rside{h} + \opposite{h} -\frac{1}{6}  (n -\multiplicity{h})(n-2\multiplicity{h})\right) \enspace .
\end{equation}

\end{document}